\ifdefined\IEEEBUILD
  \let\texprimitiveyear\year
\fi
\ifdefined\EXTERNALCLASS
\else
  \ifdefined\IEEEBUILD
  \else
    \documentclass[sn-mathphys-num,pdflatex]{sn-jnl}
  \fi
\fi
\ifdefined\IEEEBUILD
  \let\tqeyearsetter\year
  \let\year\texprimitiveyear
  \vol{XX}
  \tqeyearsetter{2026}
  
\fi

\ifdefined\IEEEBUILD\else
  \usepackage{fix-cm}
\fi
\usepackage{graphicx}
\usepackage{amsmath,amssymb,amsfonts}
\usepackage{amsthm}
\usepackage{booktabs}
\usepackage{mathtools}

\usepackage[table]{xcolor}
\ifdefined\IEEEBUILD
  \definecolor{accessblue}{cmyk}{1,.3,0,.2}
  \definecolor{greycolor}{cmyk}{0,0,0,.8}
  \definecolor{grey}{cmyk}{0,0,0,.1}
\fi

\usepackage{tikz}

\ifdefined\IEEEBUILD
  \usepackage{cite}
\fi
\usepackage{hyperref}
\hypersetup{hidelinks,bookmarksdepth=subsubsection}

\renewcommand{\le}{\leqslant}

\renewcommand{\ge}{\geqslant}

\ifdefined\IEEEBUILD
\newtheoremstyle{thmstyleone}%
  {6pt}{6pt}{\itshape}{}{\bfseries}{.}{.5em}{}
\newtheoremstyle{thmstyletwo}%
  {6pt}{6pt}{\normalfont}{}{\bfseries}{.}{.5em}{}
\newtheoremstyle{thmstylethree}%
  {6pt}{6pt}{\normalfont}{}{\bfseries}{.}{.5em}{}
\fi
\theoremstyle{thmstyleone}
\newtheorem{theorem}{Theorem}

\newtheorem{lemma}{Lemma}

\theoremstyle{thmstyletwo}

\theoremstyle{thmstylethree}

\ifdefined\IEEEBUILD
\else
\fi

\newif\iflongversion
\ifdefined\INCLUDELONGCOMPARISON
  \longversiontrue
\else
  \ifdefined\IEEEBUILD
    \longversionfalse
  \else
    \longversiontrue
  \fi
\fi
\newif\iflegacymubs
\ifdefined\INCLUDELEGACYMUBS
  \legacymubstrue
\else
  \legacymubsfalse
\fi

\newif\ifieeesubmittednotice
\ifdefined\IEEESUBMITTEDNOTICE
  \ieeesubmittednoticetrue
\else
  \ieeesubmittednoticefalse
\fi

\begin{document}

\ifieeesubmittednotice
\begin{center}
\small This work has been submitted to the IEEE for possible publication.
Copyright may be transferred without notice, after which this version may no
longer be accessible.
\end{center}
\fi

\ifdefined\IEEEBUILD
\history{Date of publication xxxx 00, 0000, date of current version xxxx 00, 0000.}
\doi{10.1109/TQE.2020.DOI}
\title{Correcting Connectivity in Arc-Based QUBO Models for Fixed-Fleet Vehicle Routing}
\author{\uppercase{Omer Gurevich}\authorrefmark{1},
\uppercase{Maor Matityahu}\authorrefmark{1},
\uppercase{Tal Mor}\authorrefmark{1}, and
\uppercase{Aryeh Lev Zabokritskiy (Yohananov)}\authorrefmark{2}}
\address[1]{Department of Computer Science, Technion--Israel Institute of
Technology, Haifa, Israel (emails: omergu@campus.technion.ac.il;
maor.m@campus.technion.ac.il; talmo@cs.technion.ac.il)}
\address[2]{Department of Computer Science, MIGAL--Galilee Research
Institute / Tel-Hai Academic College, Kiryat Shmona 11016, Israel
(email: yuhanalev@telhai.ac.il; ORCID:
\href{https://orcid.org/0000-0003-3151-6192}{0000-0003-3151-6192})}
\markboth
{Gurevich \headeretal: Correcting Connectivity in Arc-Based QUBO Models}
{Gurevich \headeretal: Correcting Connectivity in Arc-Based QUBO Models}
\corresp{Corresponding author: Aryeh Lev Zabokritskiy (Yohananov)
(email: yuhanalev@telhai.ac.il).}
\else
\title[
Correcting Connectivity in Arc-Based QUBO Models
]{
Correcting Connectivity in Arc-Based QUBO Models for Fixed-Fleet Vehicle Routing
}
\author[1]{\fnm{Omer} \sur{Gurevich}}
\email{omergu@campus.technion.ac.il}

\author[1]{\fnm{Maor} \sur{Matityahu}}
\email{maor.m@campus.technion.ac.il}

\author[1]{\fnm{Tal} \sur{Mor}}
\email{talmo@cs.technion.ac.il}

\author[2]{\fnm{Aryeh~Lev} \sur{Zabokritskiy (Yohananov)}}
\email{yuhanalev@telhai.ac.il}

\affil[1]{%
  \orgdiv{Department of Computer Science},
  \orgname{Technion -- Israel Institute of Technology},
  \orgaddress{\city{Haifa}, \country{Israel}}
}

\affil[2]{%
  \orgdiv{Department of Computer Science},
  \orgname{MIGAL -- Galilee Research Institute / Tel-Hai Academic College},
  \orgaddress{\city{Kiryat Shmona}, \postcode{11016}, \country{Israel}}
}
\fi

\ifdefined\IEEEBUILD
\begin{abstract}
\else
\abstract{
\fi
We revisit a degree-only arc Hamiltonian for fixed-fleet, homogeneous,
uncapacitated vehicle routing. Because its local penalties define only a cycle
cover, ground states may contain customer cycles disconnected from the depot.
We construct a polynomial-size quadratic unconstrained binary optimization (QUBO)
repair using capped single-commodity flow and prove that every
ground-state routing is connected and cost-optimal under explicit penalty
assumptions. For $N-1$ customers and $K$ nonempty routes, the unreduced
encoding uses exactly
$|E|(1+\lceil\log_2(N-K+1)\rceil)$ logical problem qubits. A reversible
compute--phase--uncompute realization evaluates the flow penalties in
$O(N^2\log N+N\log^2N)$ logical gates on a complete graph with $O(\log N)$
reusable workspace and no product register.
On complete loopless graphs, a depot-delimited single-sequence position encoding uses fewer problem qubits and fewer written terms when the flow-word length grows. Conversely, the flow model achieves a smaller structured logical-gate upper bound under a common reversible accounting model.
Exact audits of the Hamiltonian and circuit implementation, combined with a $1{,}200$-matrix classical benchmark, verify the formulation and quantify the connectivity gap.
Finally, a 32,000-shot Amazon Braket task on IQM Emerald characterizes depth-one termwise Ising circuits on a diagnostic $N = 4,\, K = 1$ counterexample instance.
In the degree-only circuit, $78.05\%$ of selected $p=1$ shots realize the
invalid disconnected ground state; the reduced 14-qubit flow-augmented circuit
yields no fully feasible sample. 
These device results characterize mapped Hamiltonians and compilation rather than an asymptotic routing solution advantage.
\ifdefined\IEEEBUILD
\end{abstract}
\else
}
\fi

\ifdefined\IEEEBUILD
\begin{keywords}
quadratic unconstrained binary optimization, quantum circuit compilation,
single-commodity flow, vehicle routing
\end{keywords}
\titlepgskip=-15pt
\maketitle
\else
\keywords{Quadratic Unconstrained Binary Optimization, Quantum Circuit
Compilation, Single-Commodity Flow, Vehicle Routing}
\pacs[MSC Classification]{68Q17, 90C27, 90B06}
\maketitle
\fi

\section{Introduction}
\label{intro}

We study the fixed-fleet,
homogeneous, uncapacitated, single-depot VRP with
exactly $K$ nonempty routes. Equivalently, it is a fixed-fleet
multiple traveling-salesperson problem in the terminology surveyed by
Bekta\c{s}~\cite{Bektas2006}. 
Because these models represent transportation and logistics plans, a
depot-disconnected ground state is an infeasible plan rather than merely a
different lower-energy solution.

Routing problems have been encoded as Ising or QUBO objectives for quantum
annealing and variational quantum algorithms. Relevant formulations and
solution strategies include the TSP encoding of Lucas~\cite{Lucas2014},
hybrid and annealing-based VRP approaches~\cite{Feld2019,Irie2019}, a
comparison of quantum routing formulations~\cite{Harwood2021}, the
arc-based model of Azad et al.~\cite{Azad2023}, and later approaches to
heterogeneous, capacitated, and time-window variants
\cite{Fitzek2024,Xie2024,Masuda2023,Leonidas2024QubitEfficient}. Closely
related resource and formulation comparisons include the GPS construction
of Gonz\'alez-Bermejo et al.~\cite{GonzalezBermejo2022}, the small-instance
encoding of Mohanty et al.~\cite{Mohanty2024}, and the decomposition study of
Palackal et al.~\cite{Palackal2023}. A recent colored-permutation formulation
for capacitated routing gives a different resource tradeoff~\cite{Onah2026}.
Cattelan and
Yarkoni~\cite{CattelanYarkoni2024} emphasize that explicit subtour exclusions
can make edge-based routing QUBOs impractical; the compact extended flow
formulation below provides a polynomial-size alternative, at the price of
additional flow bits and denser interactions.

The arc formulation of Azad et al. has been reused or reproduced in several
later works. A citation alone does not establish inheritance of a modeling
gap, so our audit compares displayed constraints and Hamiltonians rather than
using aggregate citation counts. Two formulations by Mohanty et~al.~\cite{MohantyTQE2023,Mohanty2024} cite Azad et~al.\ for the arc model and list Miller--Tucker--Zemlin (MTZ)-style inequalities. However, they do not encode these inequalities into their Hamiltonian, leaving isolated subtours unpenalized. Alsaiyari and Felemban~\cite{Alsaiyari2023} display
the same arc objective and four degree/depot equations in their
Eqs.~(1)--(5), without a connectivity condition, and then benchmark the
variational quantum eigensolver (VQE) and the quantum approximate optimization
algorithm (QAOA) on that formulation. The review of Niu et
al.~\cite{Niu2026Review} reproduces the same five-term Hamiltonian without a
connectivity term. These are observations about the published equations and
experiments, not claims about the authors' intentions; works that merely cite
Azad are not counted as inheriting the gap.

The issue examined here is precise. The degree constraints in the arc-based
Hamiltonian of Azad et al.~\cite{Azad2023} require every customer to have
one selected incoming and one selected outgoing arc, but these local
equations define only a cycle-cover-type subgraph: after the depot is split
into $K$ copies, customer-only cycles may remain. They do not force every
selected cycle to contain a depot copy. We give a fully specified instance in which the optimal
degree-feasible assignment contains a depot-disconnected subtour. Classical
formulations address this issue either through subset-based
subtour-elimination inequalities~\cite{Dantzig1954} or through compact
auxiliary-flow formulations, notably the single-commodity construction of
Gavish and Graves~\cite{GavishGraves1978} and its subsequent analysis by
Gouveia~\cite{Gouveia1995}. Bounded integer-to-binary maps and exact Boolean
implication penalties are also standard QUBO tools
\cite{KarimiRonagh2019,Glover2019}.

There are several quantum-compatible responses to the same feasibility
problem. Li et al.~\cite{Li2025InfeasibleConstraints} use infeasible-solution
constraints in joint and stepwise procedures for subloop elimination. Xie et
al.~\cite{Xie2024} restrict the search through a feasibility-preserving
solver. Azfar et al.~\cite{Azfar2026} identify subtours in the reported
QAOA results of Azad et al. and encode subset-based connectivity cuts. Gromiec
et al.~\cite{Gromiec2026} recently evaluated an integer arc-load
balance-and-support formulation for the equal-demand capacitated VRP (CVRP)
through a D-Wave constrained quadratic model (CQM) workflow. However, these methods either introduce significant auxiliary qubit overhead, rely on iterative cut generation rather than a single-shot Hamiltonian, or lack provable penalty guarantees.

Our contribution is an equation-level reconstruction of the failure in the published degree-only arc Hamiltonian and a polynomial-size QUBO repair of that specific model. Our repair maps each arc flow into a compact register of $L = \lceil\log_2(N - K + 1)\rceil$ bits based on the tight fixed-fleet bound $U = N - K$, enforcing bitwise support without auxiliary slack variables. Under explicit penalty bounds, we prove ground-state correctness and establish separated resource counts for logical problem qubits ($|E|(1 + L) = O(|E|\log N)$), written Hamiltonian interactions, and reversible phase-oracle synthesis. For the latter, rather than materializing the entire static quadratic expansion of $\Theta(N^3L^2)$ Pauli interactions on a complete directed graph, we adapt program-based reversible arithmetic techniques established in the QAOA literature~\cite{Bako2025} to compute and phase the balance residuals directly using reusable ancillae.

We state the underlying gate model and separate this logical arithmetic count from rotation-synthesis and hardware-connectivity overhead. For noisy intermediate-scale quantum (NISQ) devices, this separation makes formulation choice a quantum-compilation decision: the binary-register count, expanded interaction list, and structured-oracle cost need not rank encodings in the same order.

\iflongversion
We also analyze a depot-delimited single-sequence position encoding. A cyclic
sequence containing
$K$ depot occurrences represents the $K$ routes, and an explicit
no-consecutive-depot penalty ensures that every route is nonempty. The
written QUBO uses $NT$ variables, where $T=N+K-1$, and contains
$\Theta(NT^2+TN^2)$ quadratic terms. Fixing the initial depot column saves
exactly $N$ variables. When $K=1$, further elimination of the depot
variables gives the same feasible tours and objective values as the
fixed-depot TSP; it does not make the unreduced Hamiltonian polynomials
identical. The $NT$ count relies on concatenating the route blocks and is not
a formulation-independent count for vehicle-indexed position models.
\fi

The remainder of the paper is organized as follows.
Section~\ref{sec:caveat} gives the counterexample and the equation-level audit.
Section~\ref{sec:flow-repair} constructs the repair, proves its ground-state
correctness, and analyzes its resources.
\iflongversion{} Section~\ref{subsec:solving-vrp} develops the
position-indexed alternative and its forced-variable reductions.\fi{}
Section~\ref{sec:experiments} gives direct Hamiltonian verification, a
systematic exact connectivity benchmark, and a small-device characterization
of the corresponding termwise Ising circuits.
Section~\ref{sec:conclusion} summarizes the engineering trade-offs,
limitations, and open directions.

\iflongversion
\section{Preliminaries and Previous Work}
\subsection{Hamiltonian path and Hamiltonian cycle}
\label{subsec:hamiltonian}

Throughout this TSP discussion assume $N\ge3$. Given a directed or undirected
graph $G=(V,E)$, a Hamiltonian path visits every
vertex exactly once. A Hamiltonian cycle is a cycle that visits every vertex
exactly once, apart from repeating its initial vertex to close the cycle.

If a Hamiltonian cycle is represented by its cyclic vertex order, every cyclic
shift represents the same cycle. For an undirected graph, reversing the order
also represents the same cycle. Fixing one vertex at the first position removes
the cyclic-shift redundancy and eliminates variables that would otherwise encode
equivalent representations.

Because the encodings below use ordered transitions, we adopt a uniform arc
convention. For $V=\{1,\ldots,N\}$ let
\[
\mathcal A=\{(u,v)\in V\times V:u\ne v\}.
\]
We take $E\subseteq\mathcal A$ and write
$\overline E=\mathcal A\setminus E$. An undirected edge is represented by its
two orientations, with equal costs when weights are present. Thus all nonedge
sums below range only over distinct vertices; depot self-transitions are treated
explicitly in the VRP formulation.

Deciding whether a directed or undirected graph contains a Hamiltonian path or
cycle is NP-complete. We use the position-indexed formulation of
Lucas~\cite{Lucas2014}. For vertices $v=1,\ldots,N$, let
$x_{v,t}=1$ if and only if vertex $v$ occupies position $t$ in the cyclic
order.

In a valid solution, according to~\cite{Lucas2014}, the following constraints must hold:
\begin{align}
\label{eq:con1Lucas}
    & \sum_{t=1}^{N} x_{v,t}=1, \quad v=1,2,\ldots,N,\\
\label{eq:con2Lucas}
    & \sum_{v=1}^{N} x_{v,t}=1, \quad t=1,2,\ldots,N,\\
\label{eq:con3Lucas}
    & \sum_{(u,v)\in\overline E} x_{u,t} x_{v,t+1} = 0, \quad t=1,2,\ldots,N.
\end{align}

Equation~\eqref{eq:con1Lucas} places every vertex exactly once,
Eq.~\eqref{eq:con2Lucas} places exactly one vertex at each position, and
Eq.~\eqref{eq:con3Lucas} forbids transitions that are not arcs of the graph.
Together they encode a Hamiltonian cycle as a zero of
\begin{align}
\label{eq:HCyc}
    H_A
    &=A\sum_{v=1}^{N} \left(1-\sum_{t=1}^{N} x_{v,t}\right)^2
    \notag\\
    &\quad+A\sum_{t=1}^{N} \left(1-\sum_{v=1}^{N} x_{v,t}\right)^2
    \notag\\
    &\quad+A \sum_{(u,v)\in\overline E}\sum_{t=1}^{N}x_{u,t}x_{v,t+1},
\end{align}
for some constant $A>0$, where we identify $N+1$ with $1$. For the Hamiltonian path problem, the last sum should run up to $N-1$ instead of $N$. Any assignment of the variables $x_{v,t}$ for which $H_A=0$ encodes a valid Hamiltonian cycle.

\subsection{The Traveling Salesman Problem}
\label{subsec:tsp}

For a weighted directed or undirected graph $G=(V,E)$ with nonnegative arc
costs $c_{uv}$, the Traveling Salesman Problem (TSP) seeks a minimum-cost
Hamiltonian cycle. We therefore add the travel-cost term
\begin{align}
\label{eq:HB}
    H_B = B \sum_{(u,v)\in E} c_{uv} \sum_{t=1}^{N} x_{u,t} x_{v,t+1},
\end{align}
where $B>0$ sets the objective scale.

The total Hamiltonian encoding the TSP is then given by
\begin{align}
\label{eq:HTSP}
    H_{\mathrm{TSP}} = H_A + H_B,
\end{align}
where $H_A$ is defined in Eq.~\eqref{eq:HCyc}. 
Assume that the graph contains at least one Hamiltonian cycle. Choosing the
penalty weight $A$ sufficiently large compared with the objective scale
ensures that constraint violations cannot be compensated by decreasing the
travel-cost term. In particular, if
\[
A \;>\; N\cdot B \cdot \max_{(u,v)\in E} c_{uv},
\]
then any assignment that violates at least one of
Eqs.~\eqref{eq:con1Lucas}--\eqref{eq:con3Lucas} has energy strictly larger
than every feasible Hamiltonian cycle. Indeed, every feasible cycle has
objective value at most
$NB\max_{(u,v)\in E}c_{uv}$, whereas any nonzero integer penalty is at least
$A$ and the objective is nonnegative.

As noted in Sec.~\ref{subsec:hamiltonian}, every directed Hamiltonian
cycle has $N$ cyclic representations, while an undirected cycle has $2N$
representations when both orientations are counted. Fixing the start vertex to
$v=1$ removes the cyclic-shift redundancy, leaving one representation per
orientation and enabling a direct variable reduction.

Fixing $v=1$ as the initial position also identifies it as the depot used in
the VRP model of Sec.~\ref{subsec:vrp}. Adding $x_{1,1}=1$ gives
\begin{equation}
\label{eq:HTSP_fixed}
    \begin{split}
    H_{\mathrm{TSP}}
    &= A(1-x_{1,1})^2
    + A\sum_{v=1}^{N} \left(1-\sum_{t=1}^{N} x_{v,t}\right)^2\\
    &\quad + A\sum_{t=1}^{N} \left(1-\sum_{v=1}^{N} x_{v,t}\right)^2\\
    &\quad + A \sum_{(u,v)\in\overline E}\sum_{t=1}^{N}
    x_{u,t}x_{v,t+1}\\
    &\quad + B \sum_{(u,v)\in E} c_{uv}\sum_{t=1}^{N}
    x_{u,t}x_{v,t+1}.
    \end{split}
\end{equation}

The constraint fixes $x_{1,1}$ and, together with the one-hot equations, forces
the other variables in its row and column.
\iflongversion
The forced-variable reductions in Sec.~\ref{sec:savingQbits} eliminate these
variables explicitly.
\else
They may therefore be removed before constructing the logical problem
register.
\fi

The corresponding diagonal Ising Hamiltonian follows from
$x_{v,t}\mapsto(1-\sigma^z_{v,t})/2$; a detailed TSP derivation appears in
Ref.~\cite{Gurevich2025TSP}, building on Lucas~\cite{Lucas2014}.
\fi

\iflongversion
\subsection{Vehicle Routing Problem}
\else
\section{Problem Setting}
\fi
\label{subsec:vrp}

We use the following loopless directed arc convention for the VRP. Let
$V=\{1,\ldots,N\}$ with $N\ge2$, and let $G=(V,E)$ be loopless and
directed, with $E\subseteq\{(u,v)\in V\times V:u\ne v\}$.
Node $1$ is the depot and nodes $2,\ldots,N$ are customers. Each arc
$(u,v)\in E$ has a nonnegative cost $c_{uv}\ge0$. An undirected instance is
represented by including both orientations of every undirected edge, with equal
costs.

We fix $K$ with $1\le K\le N-1$. A feasible solution is a collection of
exactly $K$ nonempty depot-to-depot routes whose customer sets are pairwise
disjoint and cover $\{2,\ldots,N\}$. Whenever a ground-state guarantee or a
penalty bound is asserted, we assume that at least one such solution exists.
This is the fixed-fleet, homogeneous, uncapacitated VRP, equivalently a
fixed-fleet multiple-TSP. Capacities, heterogeneous vehicles, physical arrival
times, and time windows are outside the present model. For $K=1$, its feasible
routes coincide with the fixed-depot TSP tours.

\section{Degree-Only Arc Model and Connectivity Gap}
\label{sec:caveat}

We revisit the arc-based Ising/QUBO formulation for the VRP proposed by
Azad et al.~\cite{Azad2023}.
While their Hamiltonian correctly enforces the local degree constraints,
it does not impose global connectivity.
As a consequence, the feasible set may contain disconnected subtours that
satisfy all encoded constraints yet do not correspond to valid vehicle routes.
This section gives an explicit counterexample, identifies the same phenomenon
in a reported numerical output, and classifies verified downstream reuse of the
published equations. Section~\ref{sec:flow-repair} then gives a compact
single-commodity-flow repair.

\subsection{Arc-Based Formulation of VRP}
\label{subsec:AzadEdgeFormulation}

Azad et al.~\cite{Azad2023} propose an arc-based formulation of the Vehicle
Routing Problem (VRP), in which routes are represented by selecting arcs of a
complete loopless directed graph.
Let node $1$ denote the depot and nodes $2,\ldots,N$ denote customers.
For each directed arc $(i,j)\in E$, a binary decision variable
$x_{ij}\in\{0,1\}$ is introduced, where $x_{ij}=1$ indicates that the route
traverses from node $i$ to node $j$.

The objective is to minimize the total travel cost,
\begin{align}
\label{eq:Azad_VRP}
    C_{\mathrm{tot}}
    = \sum_{(i,j)\in E} c_{ij}\,x_{ij},
\end{align}
where $c_{ij}$ denotes the cost of traversing arc $(i,j)$.

To ensure that each customer is visited exactly once, the formulation enforces
degree constraints on every node.
Specifically, each customer node must have exactly one outgoing and one incoming
selected arc, while the depot must have exactly $K$ outgoing and $K$ incoming
arcs, corresponding to the $K$ vehicles:
\begin{align}
\label{eq:Azad_VRP_constraints}
    &\sum_{j:(i,j)\in E} x_{ij} = 1
    \qquad &&\forall i\in\{2,\ldots,N\}, \\
    &\sum_{j:(j,i)\in E} x_{ji} = 1
    \qquad &&\forall i\in\{2,\ldots,N\}, \\
    &\sum_{j:(1,j)\in E} x_{1j} = K, \\
    &\sum_{j:(j,1)\in E} x_{j1} = K .
\end{align}
These constraints guarantee that every customer is entered and exited exactly
once and that exactly $K$ selected arcs leave and enter the depot. They do not,
by themselves, guarantee that the selected arcs form $K$ depot-rooted routes.

The objective and degree-penalty parts are
\begin{equation}
\label{eq:Azad_HA}
H_{\mathrm{obj}}
 =B\sum_{(i,j)\in E}c_{ij}x_{ij}.
\end{equation}
\ifdefined\IEEEBUILD
\begin{equation}
\label{eq:Azad_Hdegree}
\begin{aligned}
H_{\mathrm{degree}}
&=A\sum_{i=2}^{N}
\left(1-\sum_{j:(i,j)\in E}x_{ij}\right)^2\\
&\quad+A\sum_{i=2}^{N}
\left(1-\sum_{j:(j,i)\in E}x_{ji}\right)^2\\
&\quad+A\left(K-\sum_{j:(1,j)\in E}x_{1j}\right)^2\\
&\quad+A\left(K-\sum_{j:(j,1)\in E}x_{j1}\right)^2.
\end{aligned}
\end{equation}
\else
\begin{equation}
\label{eq:Azad_Hdegree}
\begin{aligned}
H_{\mathrm{degree}}
&=A\sum_{i=2}^{N}
\left(1-\sum_{j:(i,j)\in E}x_{ij}\right)^2
+A\sum_{i=2}^{N}
\left(1-\sum_{j:(j,i)\in E}x_{ji}\right)^2\\
&\quad+A\left(K-\sum_{j:(1,j)\in E}x_{1j}\right)^2
+A\left(K-\sum_{j:(j,1)\in E}x_{j1}\right)^2.
\end{aligned}
\end{equation}
\fi
Thus the degree-only arc Hamiltonian is
\begin{equation}
\label{eq:Azad_HVRP}
H_{\mathrm{arc}}=H_{\mathrm{obj}}+H_{\mathrm{degree}}.
\end{equation}
Here $A,B>0$. The objective minimizes the selected-arc cost, whereas the four
squared penalties enforce the local degree equations.

While these terms correctly enforce the local degree constraints at each
node, they do not impose any global connectivity requirement.
As a result, the formulation admits disconnected solutions composed of multiple
cycles that collectively satisfy all constraints but do not correspond to valid
vehicle routes.
An explicit counterexample illustrating this failure is shown in
Fig.~\ref{fig:disconnected}. Take $N=6$, $K=1$, and the complete loopless
directed graph. Let
\[
F=\{(1,2),(2,3),(3,1),(4,5),(5,6),(6,4)\},
\]
assign cost $1$ to the arcs in $F$, and assign cost $10$ to every other
arc. Every degree-feasible assignment selects six arcs and therefore has cost
at least $6$. The arcs in $F$ attain this bound, so they form an optimal
degree-feasible assignment, but they decompose into the disconnected cycles
$1\to2\to3\to1$ and $4\to5\to6\to4$. Any Hamiltonian cycle must cross the
partition $\{1,2,3\}\cup\{4,5,6\}$ in both directions and consequently costs
at least $2\cdot10+4\cdot1=24$.

\begin{figure}[!b]
    \centering
    \begin{tikzpicture}[scale=1]

        \node[draw,circle,inner sep=1.5pt] (v1) at (-1,  1.732) {1};
        \node[draw,circle,inner sep=1.5pt] (v2) at (-2,  0)     {2};
        \node[draw,circle,inner sep=1.5pt] (v3) at (-1, -1.732) {3};
        \node[draw,circle,inner sep=1.5pt] (v4) at ( 1, -1.732) {4};
        \node[draw,circle,inner sep=1.5pt] (v5) at ( 2,  0)     {5};
        \node[draw,circle,inner sep=1.5pt] (v6) at ( 1,  1.732) {6};

        \tikzset{ed/.style={->,>=latex}}

        \draw[ed, line width=2.5pt, blue] (v1) -- node[above,sloped] {\scriptsize $1$} (v2);
        \draw[ed, line width=2.5pt, blue] (v2) -- node[above,sloped] {\scriptsize $1$} (v3);
        \draw[ed, line width=2.5pt, blue] (v3) -- node[above,sloped] {\scriptsize $1$} (v1);

        \draw[ed, line width=2.5pt, blue] (v4) -- node[above,sloped] {\scriptsize $1$} (v5);
        \draw[ed, line width=2.5pt, blue] (v5) -- node[above,sloped] {\scriptsize $1$} (v6);
        \draw[ed, line width=2.5pt, blue] (v6) -- node[above,sloped] {\scriptsize $1$} (v4);

        \draw[ed, line width=0.8pt, blue] (v3) -- node[above,sloped] {\scriptsize $10$} (v4);
        \draw[ed, line width=0.8pt, blue] (v6) -- node[above,sloped] {\scriptsize $10$} (v1);

    \end{tikzpicture}
    \caption{A fully specified counterexample for $K=1$. The six bold arcs
    have cost $1$ and form an optimal degree-feasible assignment of cost $6$,
    but they consist of two disconnected cycles. Every unshown arc has cost
    $10$; the two thin arcs illustrate a connected Hamiltonian cycle of cost
    $24$.}
    \label{fig:disconnected}
\end{figure}
Connectivity problems, commonly referred to as \emph{subtours} (disconnected
cycles), arise when an arc-based formulation enforces only local degree
constraints but does not impose a global connectivity requirement. In such
cases, the selected arcs may decompose into multiple disjoint cycles, some of
which may be disconnected from the depot. These configurations are invalid VRP
solutions, yet they can satisfy all degree constraints and hence appear as
low-energy states of a degree-only Hamiltonian. The repair used here adds
a compact single-commodity flow layer whose conservation constraints enforce
that every customer lies in the depot-connected component.

The example isolates the issue addressed in the remainder of the paper: local
degree equations select a cycle-cover-type subgraph, not necessarily a
collection of depot-rooted routes. We now add the compact auxiliary-flow conditions used in
classical routing formulations and translate them into binary penalty terms.

\subsection{Evidence in the Source Model and Downstream Reuse}
\label{subsec:downstream-audit}

A reported numerical output of Azad et al.~\cite{Azad2023} already exhibits
the omitted-connectivity phenomenon. In their Experiment~2, the adjacency
matrix $A_2$ displayed in their Eq.~(34) contains the customer-only cycle
$2\to3\to2$, disconnected from depot~$0$, in addition to the two depot routes
$0\to1\to0$ and $0\to4\to0$. It satisfies their stated customer- and
depot-degree equations and is reported with cost $128.545$, below the cost
$138.511$ of the connected matrix $A_3$ in their Eq.~(35). Thus the reported
example itself demonstrates why the degree-only objective can prefer a cheaper
disconnected degree-feasible subgraph to a feasible VRP solution.

A citation alone does not show that a later model inherits an earlier modeling
gap. We therefore compare the constraints that the selected follow-on works
state with the terms that their displayed Hamiltonians actually minimize. In
the TQE formulation of Mohanty et al.~\cite{MohantyTQE2023}, Eq.~(17) lists
MTZ-style inequalities and identifies them as subtour-elimination conditions,
whereas Eq.~(18) contains only the objective and the four degree/depot
penalties inherited from Azad et al. The later QSVM formulation
\cite{Mohanty2024} lists MTZ-style conditions in Eq.~(22), but its five-term
Hamiltonian in Eq.~(23) again contains no corresponding term; Eq.~(24) merely
defines the binary arc vector. Thus the connectivity conditions are discussed,
but are not part of either stated unconstrained Hamiltonian.

Alsaiyari and Felemban~\cite{Alsaiyari2023} display the same arc objective and
four customer/depot degree equations in their Eqs.~(1)--(5), again without a
flow, ordering, cut, or other connectivity condition. They then report
classical, VQE, and QAOA experiments for instances $(3,2)$, $(4,2)$, and
$(5,2)$. This establishes use of the incomplete formulation in their
experimental pipeline, but it does not establish that any particular
reported output is disconnected, because their full instances and code were
not reconstructed here.

Niu et al.~\cite{Niu2026Review} provide a different kind of propagation. Their
review reproduces the five-term Azad--Mohanty Hamiltonian as a VRP formulation,
again without a connectivity term, and does not
flag the disconnected-cycle possibility. Because this is a review, the finding
concerns presentation of the model rather than the correctness of a new solver.

\iflongversion
Jaroszczuk~\cite{Jaroszczuk2025} similarly describes Azad's small numerical
tests as showing model correctness, but reproduces no Azad equation and
reports no new implementation based on that formulation. We therefore
classify this as narrative propagation rather than inherited solver behavior.
\fi

These classifications concern the published equations, experiments, and
descriptions; no claim of intent is made. The counterexample above supplies
the mathematical reason that the omitted condition matters.

The object audited here is the stand-alone Hamiltonian that is displayed and
minimized in the cited works. One may instead place a degree-only Hamiltonian
inside a lazy-constraint loop that detects disconnected samples and adds new
diagonal penalties iteratively, as in iterative constraint-generation
approaches~\cite{Li2025InfeasibleConstraints}. Such a procedure is a different algorithm: it
requires a separation rule, an update and termination specification, and its
own resource analysis. It does not make the original static Hamiltonian
connectivity correct.

\section{Flow-Augmented QUBO Repair}
\label{sec:flow-repair}

\subsection{Single-Commodity-Flow Conditions}
\label{subsec:flow-constraints}
A classical and well-established way to prevent disconnected solutions in
arc-based formulations of the Vehicle Routing Problem is to introduce explicit
flow constraints that enforce global connectivity. In particular,
\emph{single-commodity flow} formulations ensure that every customer is
connected to the depot through the selected arcs
\cite{GavishGraves1978,Gouveia1995}.

A recent quantum-oriented use of the same classical principle appears in
Gromiec et al.~\cite{Gromiec2026}. Their F1 formulation uses integer arc-load
variables, customer balance, and arc support as one of four equal-demand CVRP
models evaluated through D-Wave's CQM workflow. It does not give an exact
bounded binary expansion, the fixed-fleet bound $U=N-K$, bitwise QUBO support
coupling, a ground-state theorem, or the resource accounting developed below.

In addition to the binary routing variables $x_{ij}\in\{0,1\}$, we introduce
flow variables $f_{ij}\ge 0$, representing the amount of flow sent along arc
$(i,j)$. The flow originates at the depot (node $1$) and is consumed at the
customers.

A standard single-commodity flow formulation imposes the constraints below.
We use the sharper uniform capacity
\begin{equation}
\label{eq:flow-upper-bound}
U=N-K.
\end{equation}
Indeed, if the $K$ nonempty routes contain $m_1,\ldots,m_K$ customers, then
$m_r\ge1$, $\sum_r m_r=N-1$, and therefore $m_r\le N-K$ for every route.
The same bound appears explicitly as
$y_{ij}\le (|V|-K)x_{ij}$ in a classical adapted Gavish--Graves mTSP
formulation~\cite{Campuzano2020}; here it is carried into the capped binary
register. The usual bound $N-1$ is valid but needlessly loose when $K$ is
large.
\begin{align}
\label{eq:flow1}
    &\sum_{j:(j,i)\in E} f_{ji}
    -\sum_{j:(i,j)\in E} f_{ij}=1,
    \quad i = 2,\ldots,N, \displaybreak[2]\\
\label{eq:flow2}
    &\sum_{j:(1,j)\in E} f_{1j}
    -\sum_{j:(j,1)\in E} f_{j1}=N-1, \\
\label{eq:flow3}
    &0 \le f_{ij} \le U\,x_{ij},
    \quad (i,j)\in E.
\end{align}
Constraint~\eqref{eq:flow1} enforces flow conservation with unit consumption at
each customer, \eqref{eq:flow2} supplies the total demand from the depot, and
\eqref{eq:flow3} couples flow to routing decisions so that flow can traverse
only selected arcs. Since we enforce conservation only on customer nodes, the
depot balance \eqref{eq:flow2} follows automatically by summing
\eqref{eq:flow1} over all customers.
The reachability argument requires only nonnegative real flows. For the QUBO
encoding we restrict to nonnegative integer flows; this loses no feasible
routing, because every feasible route collection admits the integral flow
constructed in the proof of Theorem~\ref{thm:flow-correctness}.

Here we take unit demand at every customer, i.e., $q_i\equiv 1$, which suffices
to eliminate disconnected subtours while keeping the flow layer compact. More
general demands $q_i$ (as in, e.g.,~\cite{Fitzek2024}) can be incorporated in the
same framework at the cost of additional bookkeeping in the flow variables.

The flow-augmented Hamiltonian consists of the objective, the degree penalties,
the flow-balance penalties, and the arc--flow coupling penalty. Flow
conservation at each customer node is enforced by
\begin{equation}
\label{eq:H-flow-customer}
H_{\mathrm{flow}}
=
A_f
\sum_{i=2}^{N}
\left(
\sum_{j:(j,i)\in E} f_{ji}
-
\sum_{j:(i,j)\in E} f_{ij}
-
1
\right)^2.
\end{equation}

Summing all customer-balance residuals gives the depot-balance residual.
Consequently, the efficient Hamiltonian needs no separate squared depot term:
zero customer-balance penalty already enforces Eq.~\eqref{eq:flow2}. Omitting
that redundant square preserves the routing projection and the zero-penalty
set while reducing the written interaction count.

It remains to enforce the coupling condition~\eqref{eq:flow3}
within a quadratic QUBO/Ising Hamiltonian.
At the classical level, \eqref{eq:flow3} is a one-sided capacity bound that
forces $f_{ij}=0$ whenever $x_{ij}=0$ and upper-bounds $f_{ij}$ by $U$ when
$x_{ij}=1$.
A convenient way to express this coupling as a penalty in a \emph{classical}
(non-binary) formulation is via the positive-part operator:
\begin{equation}
\label{eq:Hcap-classical}
H_{\mathrm{cap}}^{(+)}
=
A_c
\sum_{(i,j)\in E}
\Bigl(f_{ij}-Ux_{ij}\Bigr)_+^2 ,
\end{equation}
where $(u)_+ \triangleq \max\{u,0\}$ and $A_c>0$ is a penalty weight.
Indeed, $H_{\mathrm{cap}}^{(+)}=0$ holds if and only if
$f_{ij}\le Ux_{ij}$ for all $(i,j)$, and together with the implicit
nonnegativity $f_{ij}\ge 0$ this reproduces~\eqref{eq:flow3}.

The positive-part expression in~\eqref{eq:Hcap-classical} is not a quadratic
polynomial in the existing variables $x_{ij}$ and $f_{ij}$. After an integer
binary discretization it can be encoded exactly in QUBO form by introducing
suitable slack or auxiliary bits. Below we instead use a bounded binary flow
register together with a bitwise support penalty, which gives the required
capacity coupling without additional slack registers.

\begin{lemma}[Flow support implies depot reachability]
\label{lem:flow-eliminates-subtours}
Let $x_{ij}\in\{0,1\}$ and $f_{ij}\ge0$ satisfy
\[
\sum_{j:(j,i)\in E}f_{ji}
-\sum_{j:(i,j)\in E}f_{ij}=1,
\qquad i=2,\ldots,N,
\]
and suppose that $f_{ij}>0$ implies $x_{ij}=1$ for every $(i,j)\in E$.
Then every customer is reachable from the depot in the selected directed graph
\[
G_x=\bigl(V,\{(i,j)\in E:x_{ij}=1\}\bigr).
\]
In particular, no customer lies in a depot-disconnected selected component.
\end{lemma}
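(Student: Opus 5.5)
The plan is a standard cut argument by contradiction. Let $R\subseteq V$ be the set of vertices reachable from the depot $1$ in $G_x$; in particular $1\in R$. Suppose some customer is not reachable, and set $S=V\setminus R$, which is then a nonempty subset of $\{2,\ldots,N\}$. I will sum the customer balance equations over $S$ and derive an impossible sign condition on the flow crossing the cut $(R,S)$.

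The first step is the summation. When we sum
$\sum_{j:(j,i)\in E}f_{ji}-\sum_{j:(i,j)\in E}f_{ij}=1$ over $i\in S$, every arc with both endpoints in $S$ appears once with a plus sign and once with a minus sign, so it cancels. What remains is
\[
\sum_{(j,i)\in E:\,j\in R,\,i\in S} f_{ji}\;-\;\sum_{(i,j)\in E:\,i\in S,\,j\in R} f_{ij}\;=\;|S|\;\ge\;1 .
\]

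The second step is to show that the inflow term vanishes. Take any arc $(j,i)$ with $j\in R$ and $i\in S$. If $f_{ji}>0$, the support hypothesis gives $x_{ji}=1$, so $i$ would be reachable from $j$ and hence from the depot, contradicting $i\notin R$. Therefore $f_{ji}=0$ on every arc entering $S$, and the left-hand side above equals $-\sum f_{ij}\le 0$ by nonnegativity. This contradicts the lower bound $|S|\ge1$, so $S$ is empty.

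Every customer is therefore reachable from the depot, and the ``in particular'' clause follows at once. The only step that needs care is the bookkeeping in the summation: internal arcs of $S$ must cancel exactly, including when $S$ contains vertices with no arcs to $R$. The depot lies outside $S$, so no depot equation is used, and no integrality or capacity condition is needed. Nonnegativity of $f$ and the support implication are the only hypotheses beyond the balance equations.
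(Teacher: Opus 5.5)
Your proposal is correct and is essentially the paper's proof. The paper also takes $S$ to be the set of unreachable customers, uses the support condition to force zero flow on every arc entering $S$, and sums the balance equations over $S$. This yields a nonpositive left-hand side equal to $|S|>0$, which is the contradiction.
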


\begin{proof}
Suppose that some customer is not reachable from the depot, and let
$S\subseteq\{2,\ldots,N\}$ be the nonempty set of all unreachable customers.
No selected arc enters $S$ from $V\setminus S$; otherwise its head would be
reachable. The support condition therefore makes every flow on an arc entering
$S$ equal to zero. Summing the balance equations over $i\in S$ and cancelling
the flows on arcs internal to $S$ gives
\[
\sum_{\substack{(j,i)\in E\\ j\notin S,\ i\in S}} f_{ji}
-
\sum_{\substack{(i,j)\in E\\ i\in S,\ j\notin S}} f_{ij}
=|S|.
\]
The first sum is zero and the second is nonnegative, so the left-hand side is
nonpositive, contradicting $|S|>0$. Hence every customer is reachable from the
depot.
\end{proof}

In the next subsection, we instead replace~\eqref{eq:Hcap-classical} by a purely
quadratic coupling between the binary routing variables and the binary flow
bits, which enforces that flow can be nonzero only on selected arcs while
maintaining a valid QUBO/Ising encoding.

\subsection{Bounded QUBO, Correctness, and Ising Map}
\label{subsec:flow-ising}

To obtain a valid QUBO/Ising encoding, we specialize the range-exact binary
branch of Karimi and Ronagh's bounded integer construction
\cite{KarimiRonagh2019} to represent exactly $0,\ldots,U$. Fitzek et
al.~\cite{Fitzek2024} use the same capped-bit pattern for a per-vehicle
capacity register in a position-indexed HVRP model; here it becomes a per-arc
flow register in the single-commodity-flow repair. Let
\begin{equation}
\label{eq:flow-binary-expansion}
\begin{aligned}
L&=\lceil\log_2(U+1)\rceil,\\
w_\ell&=
\begin{cases}
2^\ell, & 0\le \ell\le L-2,\\
U+1-2^{L-1}, & \ell=L-1,
\end{cases}\\
f_{ij}&=\sum_{\ell=0}^{L-1}w_\ell y_{ij,\ell}.
\end{aligned}
\end{equation}
where $y_{ij,\ell}\in\{0,1\}$. The lower $L-1$ bits represent
$0,\ldots,2^{L-1}-1$ and the final bit shifts that interval by
$U+1-2^{L-1}$. Their union covers every integer from $0$ through $U$,
possibly with duplicate representations, while never exceeding $U$.

\medskip
\noindent
While the binary encoding controls the \emph{magnitude} of $f_{ij}$, it does not
by itself ensure that flow is carried only on selected routing arcs.
To retain the single-commodity flow logic (and hence penalize disconnected
subtours), we enforce the coupling condition that flow may appear only on
selected arcs. A convenient quadratic penalty is obtained by imposing the
constraint bitwise. The resulting monomial is the standard QUBO penalty for
the Boolean implication $y_{ij,\ell}\le x_{ij}$~\cite{Glover2019}:
\begin{equation}
\label{eq:H-cap-bitwise}
H_{\mathrm{cap}}
=
A_c
\sum_{(i,j)\in E}
\sum_{\ell=0}^{L-1}
y_{ij,\ell}\,\bigl(1-x_{ij}\bigr),
\end{equation}
with $A_c>0$.
This term enforces that $x_{ij}=0$ forces every $y_{ij,\ell}=0$ and hence
$f_{ij}=0$. If $x_{ij}=1$, the bounded expansion supplies
$0\le f_{ij}\le U$. Consequently, the exact classical coupling
$0\le f_{ij}\le Ux_{ij}$ is enforced.

\medskip
\noindent
As in the classical single-commodity flow formulation, we impose flow
conservation at each customer node:
\[
\sum_{j:(j,i)\in E} f_{ji}
-\sum_{j:(i,j)\in E} f_{ij}=1,
\qquad i=2,\ldots,N.
\]
Substituting \eqref{eq:flow-binary-expansion} yields the quadratic penalty
\begin{equation}
\label{eq:H-flow-cons}
\begin{aligned}
H_{\mathrm{flow}}
&=A_f\sum_{i=2}^{N}\Biggl(
\sum_{j:(j,i)\in E}\sum_{\ell=0}^{L-1}w_\ell y_{ji,\ell}\\
&\hspace{5em}
-\sum_{j:(i,j)\in E}\sum_{\ell=0}^{L-1}w_\ell y_{ij,\ell}
-1\Biggr)^2,
\end{aligned}
\end{equation}
where $A_f>0$ is a penalty weight.

\medskip
\noindent
Combining the original arc-based objective and degree constraints with the flow
penalties, we obtain the fully-binary quadratic Hamiltonian
\begin{equation}
\label{eq:H-flow-qubo}
H_{\mathrm{VRP}}^{\mathrm{flow}}
=
H_{\mathrm{obj}}
+
H_{\mathrm{degree}}
+
H_{\mathrm{flow}}
+
H_{\mathrm{cap}}.
\end{equation}
The components in~\eqref{eq:H-flow-qubo} are defined in
Eqs.~\eqref{eq:Azad_HA}, \eqref{eq:Azad_Hdegree},
\eqref{eq:H-flow-cons}, and \eqref{eq:H-cap-bitwise}.

\begin{theorem}[Correctness of the flow-augmented Hamiltonian]
\label{thm:flow-correctness}
Assume that $G$ admits at least one feasible $K$-route solution and that
$c_{ij}\ge0$ for all $(i,j)\in E$. Let
\[
c_{\max}=\max_{(i,j)\in E}c_{ij},
\qquad
P=\min\{A,A_f,A_c\}.
\]
If
\[
P>B(N-1+K)c_{\max},
\]
then the routing part of every minimizing binary assignment of
$H_{\mathrm{VRP}}^{\mathrm{flow}}$ encodes an optimal feasible collection of
$K$ nonempty depot-to-depot routes. Conversely, every optimal feasible routing
has at least one binary flow encoding that minimizes the Hamiltonian. After the
binary-to-Ising substitution, the ground eigenspace is spanned by
computational-basis states encoding such minimizing assignments.
\end{theorem}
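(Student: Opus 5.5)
The plan is a standard penalty-dominance argument, split into three steps. First I establish that every feasible routing has a zero-penalty encoding. Then I show that every violating assignment costs strictly more. Finally I identify the minimizers.

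\textbf{Step 1 (zero-penalty encoding of feasible routings).} Fix a feasible collection of $K$ nonempty routes. Set $x_{ij}=1$ exactly on the arcs used, so $H_{\mathrm{degree}}=0$. On route $r$, visiting customers $v_1,\dots,v_{m_r}$, assign to the arc leaving the depot the flow $m_r$. Assign to the arc $(v_t,v_{t+1})$ the flow $m_r-t$, and to the arc returning to the depot the flow $0$. Every other arc gets flow $0$.

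This flow is integral and satisfies $0\le f_{ij}\le m_r\le N-K=U$. Each customer receives one more unit than it sends, so Eq.~\eqref{eq:flow1} holds. Because the capped expansion \eqref{eq:flow-binary-expansion} represents every integer in $\{0,\dots,U\}$, I can choose bits $y_{ij,\ell}$ realizing each $f_{ij}$. On unselected arcs all bits are $0$, and zero flow can be represented with all bits $0$ on any arc. Hence $H_{\mathrm{flow}}=H_{\mathrm{cap}}=0$. The energy is therefore $B\cdot C(\text{routing})$.

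Every feasible routing uses exactly $(N-1)+K$ arcs. Its cost is thus at most $B(N-1+K)c_{\max}$, so the minimum energy $E^\star$ satisfies $E^\star\le B(N-1+K)c_{\max}<P$.

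\textbf{Step 2 (violating assignments cost more than $E^\star$).} Take any binary assignment with some nonzero penalty term.
\begin{itemize}
\item Every degree square is the square of an integer, so a nonzero one contributes at least $A$.
\item Every flow residual is an integer, because the weights $w_\ell$ are integers. A nonzero flow square therefore contributes at least $A_f$.
\item Each cap monomial is $0$ or $A_c$.
\end{itemize}
Since $H_{\mathrm{obj}}\ge0$, the energy is at least $P>E^\star$, so such an assignment is not a minimizer. Every minimizer consequently has all penalties zero.

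\textbf{Step 3 (zero-penalty assignments encode feasible routings, and minimizers are optimal).} Suppose all penalties vanish.
\begin{itemize}
\item Zero $H_{\mathrm{cap}}$ gives the support implication $f_{ij}>0\Rightarrow x_{ij}=1$, since the weights are positive.
\item Zero $H_{\mathrm{flow}}$ gives customer balance.
\item Lemma~\ref{lem:flow-eliminates-subtours} then makes every customer reachable from the depot in $G_x$.
\item The degree equations make $G_x$ a subgraph in which every customer has in- and out-degree $1$ and the depot has in- and out-degree $K$.
\end{itemize}

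I expect the main care to be needed in converting reachability plus these degrees into exactly $K$ nonempty depot-to-depot routes. The plan is to decompose the arc set into closed walks. Since customers have degree one, every cycle avoiding the depot is a customer-only cycle, and its customers are unreachable from the depot, because their unique in-arcs come from within the cycle. Reachability rules out such cycles. The remaining arcs split at the depot into $K$ paths, each starting and ending at the depot. The graph is loopless, so each path is nonempty. The customer sets are disjoint and cover $\{2,\dots,N\}$.

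With all penalties zero, the energy equals $B$ times the routing cost. Minimality together with Step 1 then gives optimality among feasible routings. Conversely, Step 1 shows that every optimal routing attains $E^\star$ with some flow encoding.

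For the Ising statement, the substitution $x\mapsto(1-\sigma^z)/2$ produces a diagonal operator whose eigenvalue on each basis state is the QUBO value. The ground eigenspace is therefore spanned by the minimizing basis states.
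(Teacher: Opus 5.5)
Your proposal is correct and follows the paper's proof essentially step by step: it uses the same remaining-load flow certificate, the same integrality-based penalty-dominance bound against $B(N-1+K)c_{\max}$, and Lemma~\ref{lem:flow-eliminates-subtours} followed by a decomposition of the degree-feasible, depot-reachable subgraph into $K$ nonempty depot routes. The only difference is in wording: the paper follows walks from the $K$ depot out-arcs, which cannot merge or enter a customer-only cycle because every customer has in-degree one, while you describe a cycle decomposition in which reachability rules out the customer-only cycles.
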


\begin{proof}
First fix a feasible routing. For each route
\[
1 \to v_1 \to v_2 \to \cdots \to v_m \to 1,
\]
define
\[
\begin{aligned}
f_{1,v_1}&=m,\\
f_{v_k,v_{k+1}}&=m-k\quad(k=1,\ldots,m-1),\\
f_{v_m,1}&=0,
\end{aligned}
\]
and set every flow on an unselected arc to zero. With
$v_0=v_{m+1}=1$, each customer $v_k$ has net inflow
$f_{v_{k-1},v_k}-f_{v_k,v_{k+1}}=1$. The depot supplies
$\sum_r m_r=N-1$ units, and the support penalties vanish. Moreover,
$m\le U=N-K$ because the other $K-1$ nonempty routes contain at least one
customer each. Hence $0\le f_{ij}\le U$.
Equation~\eqref{eq:flow-binary-expansion} represents
every integer in this range, so every feasible routing has a binary assignment
for which all penalties vanish.

A feasible routing selects exactly $N-1+K$ arcs. Its energy is therefore at
most $B(N-1+K)c_{\max}$. Every equality residual is an integer, and every
summand in $H_{\mathrm{cap}}$ is binary. Hence any assignment with a nonzero
penalty has energy at least $P$, because the objective is nonnegative. The
assumed strict inequality shows that no such assignment can minimize the
Hamiltonian.

Every minimizer therefore satisfies the degree, customer-balance, and support
constraints exactly. The degree equations give every customer one
selected incoming and one selected outgoing arc and give the depot $K$ of
each. Lemma~\ref{lem:flow-eliminates-subtours} makes every customer reachable
from the depot, so no customer-only cycle remains. Start a walk along each of
the $K$ selected arcs leaving the depot and then follow the unique outgoing arc
at every customer. Two such walks cannot merge, because their first common
customer would then have two selected incoming arcs. A walk cannot enter a
customer-only cycle either, because the entry vertex would again have two
selected incoming arcs. Since the graph is finite, each walk must therefore
return to the depot, using one of its $K$ selected incoming arcs. The walks
decompose the selected subgraph into exactly $K$ customer-disjoint
depot-to-depot routes. Because the graph is loopless,
each route is nonempty. On this zero-penalty set the Hamiltonian equals
$H_{\mathrm{obj}}$, so its minimizers are precisely the minimum-cost feasible
routings together with their admissible flow encodings.
\end{proof}

\noindent
Although the above flow certificate was defined by following each route in
visit order, it is a single static assignment. It certifies the existence of a
representable integer flow for every feasible connected routing. Each such
binary assignment is a computational-basis
eigenstate of the diagonal Hamiltonian; it belongs to the ground eigenspace if
and only if its routing part has minimum travel cost.

Applying the standard mapping $b\mapsto (1-\sigma_b^z)/2$ to all binary
variables in \eqref{eq:H-flow-qubo} yields the quantum Ising Hamiltonian.  For
compactness, put
$\Delta_b=1-\sigma_b^z$ and
$\mathcal H_{\mathrm q}=H_{\mathrm{VRP}}^{\mathrm{flow,quantum}}$. The exact
operator is most transparently specified by the substitution
\begin{equation}
\label{eq:H-flow-quantum}
\mathcal H_{\mathrm q}
=H_{\mathrm{VRP}}^{\mathrm{flow}}
\!\left(
x_{ij}\mapsto\frac{\Delta_{x_{ij}}}{2},\;
y_{ij,\ell}\mapsto\frac{\Delta_{y_{ij,\ell}}}{2}
\right).
\end{equation}
This substitution in Eq.~\eqref{eq:H-flow-qubo} is exact and leaves the
computational-basis energy of every binary assignment unchanged.

\medskip
\noindent
The sufficient penalty condition and its feasibility assumptions are stated in
Theorem~\ref{thm:flow-correctness}.

The formulas below distinguish three resource levels.  A binary variable maps
to one \emph{logical problem qubit} in the direct Ising encoding.  Arithmetic
ancillas used to implement a phase oracle are additional logical circuit
qubits.  General physical-qubit, embedding, routing, and fault-tolerance
overheads are hardware dependent and are not part of the analytic bounds below;
Section~\ref{subsec:iqm-hardware} reports the realized mapping of one finite
termwise-Ising implementation.

The displayed symmetric arc model has one routing bit $x_{ij}$ per directed
arc and an $L$-bit flow register $(y_{ij,\ell})_{\ell=0}^{L-1}$ on that arc.
Its exact logical
problem-qubit count is therefore
\begin{equation}
\label{eq:flow-logical-qubits}
n_{\mathrm{prob}}^{\mathrm{flow}}=|E|(1+L),
\qquad L=\left\lceil\log_2(N-K+1)\right\rceil.
\end{equation}
This is the exact count for the displayed register allocation. Every feasible
routing admits a canonical certificate with
$f_{i1}=0$ on arcs entering the depot, so one may add these equalities and
eliminate the corresponding registers without losing a routing solution. The
equalities do not follow from the displayed system itself, which can also
admit bounded circulations, and the strengthening changes only the auxiliary
zero-penalty degeneracy.
For a complete loopless directed graph this becomes
\[
N(N-1)\left(1+\left\lceil\log_2(N-K+1)\right\rceil\right).
\]
It is $O(|E|\log N)$ uniformly. For fixed $K$, or more generally whenever
$K\le(1-\delta)N$ for a constant $\delta>0$, one has
$L=\Theta(\log N)$ and the count is $\Theta(|E|\log N)$. The formal
$\Theta(|E|)$ boundary occurs only in the near-saturated fleet regime
$K=N-O(1)$; we record it as a parameterized edge case, not as evidence of a
practical quantum advantage.

To separate algebraic size from circuit cost, let $d_i^-$ and $d_i^+$ be the
in- and out-degrees of vertex $i$.  Before repeated coefficients are
collected, the number of written quadratic-monomial occurrences in the degree,
flow-balance, and support penalties is
\begin{equation}
\label{eq:flow-occurrences}
\begin{aligned}
Q_{\mathrm{occ}}^{\mathrm{flow}}
 ={}&\sum_{i=1}^{N}\left[
\binom{d_i^+}{2}+\binom{d_i^-}{2}
\right]\\
&+\sum_{i=2}^{N}\binom{L(d_i^-+d_i^+)}{2}
+|E|L.
\end{aligned}
\end{equation}
On a complete directed graph this is
\begin{equation}
\begin{aligned}
Q_{\mathrm{occ,comp}}^{\mathrm{flow}}
={}&2N\binom{N-1}{2}
 +(N-1)\binom{2(N-1)L}{2}\\
 &+N(N-1)L
 =\Theta(N^3L^2).
\end{aligned}
\end{equation}
The same asymptotic order holds after collecting nonzero couplers, but
Eq.~\eqref{eq:flow-occurrences} deliberately counts written occurrences rather
than claiming an exact number of distinct hardware interactions.
A direct termwise phase implementation applies a phase to each collected
nonzero term and therefore has the same asymptotic gate count. The
resource-aware reversible-oracle synthesis below instead evaluates each
residual arithmetically and thereby avoids materializing that expansion. Its
compute--phase--uncompute schedule separately keeps the ancillary workspace
logarithmic.

We realize this alternative through the established programmatic
compute--phase--uncompute
paradigm~\cite{Bako2025}. Assume exact integer arithmetic, logical all-to-all
connectivity, and ripple-carry addition. We count Toffoli/CNOT-class gates for
compute--uncompute and each logical phase or controlled-phase rotation as one
operation before rotation synthesis. The sharp uniform signed-accumulator width
for the displayed loopless model is
\[
q=1+\left\lceil
\log_2\bigl((N-1)U+1\bigr)
\right\rceil=O(\log N).
\]
Indeed, during the stated schedule the accumulator lies between
$-(N-1)U-1$ and $(N-1)U$, both of which fit in this two's-complement range.
For each customer $i$, initialize a signed $q$-qubit accumulator to zero.
Reversible ripple-carry additions add each incoming bounded flow word,
subtract each outgoing word, and subtract the unit demand, leaving the
balance residual
\[
r_i=\sum_{j:(j,i)\in E}f_{ji}
-\sum_{j:(i,j)\in E}f_{ij}-1
\]
in the accumulator. Write its little-endian magnitude bits as
$b_{i,0},\ldots,b_{i,q-2}$, and let $s_i$ be the sign bit. Thus
\[
r_i=\sum_{j=0}^{q-2}2^j b_{i,j}-2^{q-1}s_i.
\]
Boolean idempotence gives the exact two's-complement phase polynomial
\begin{align}
\label{eq:signed-square-phase}
r_i^2={}&
\sum_{j=0}^{q-2}2^{2j}b_{i,j}
+2^{2q-2}s_i
+\sum_{0\le j<k\le q-2}2^{j+k+1}b_{i,j}b_{i,k}
\notag\\
&-\sum_{j=0}^{q-2}2^{q+j}s_i b_{i,j}.
\end{align}
Consequently, no product register is needed. Define the logical diagonal gates
\[
P_j(\phi)\lvert z\rangle=e^{i\phi z_j}\lvert z\rangle,\qquad
CP_{jk}(\phi)\lvert z\rangle=e^{i\phi z_jz_k}\lvert z\rangle.
\]
For phase-separator angle $\gamma$, one $P$ gate per accumulator bit and one
$CP$ gate per bit pair implement
$D_i\lvert r_i\rangle=\exp(-i\gamma A_f r_i^2)\lvert r_i\rangle$ exactly on
the computational basis. Equivalently, if
$w_j=2^j$ for $j<q-1$ and $w_{q-1}=-2^{q-1}$, the angles are
$-\gamma A_f w_j^2$ and $-2\gamma A_f w_jw_k$, respectively.
Relative to explicit reversible squaring, this removes the clean $2q$-qubit
product register while retaining the $O(q^2)$ logical phase-operation count
and the same $O(q)$ asymptotic workspace order.

Writing $\lvert y\rangle$ for all persistent flow-bit registers, the complete
schedule is
\[
\begin{aligned}
U_i\lvert y\rangle\lvert0\rangle_q
&=\lvert y\rangle\lvert r_i\rangle,\\
U_i^\dagger D_iU_i\lvert y\rangle\lvert0\rangle_q
&=e^{-i\gamma A_f r_i^2}\lvert y\rangle\lvert0\rangle_q .
\end{aligned}
\]
The inverse additions therefore clear the residual accumulator, so it and any
adder carry bits can be reused for the next customer. The flow words $y$ are
persistent problem registers, not reusable ancillas. Hence the peak
logical-qubit count of this implementation is
\begin{equation}
\label{eq:flow-peak-qubits}
n_{\mathrm{peak}}^{\mathrm{flow}}
=|E|(1+L)+O(q).
\end{equation}
Each incident-word addition or subtraction costs $O(q)$ gates, and compute
and uncompute change only the constant factor. The square phase uses exactly
$q+\binom{q}{2}$ logical diagonal operations. Thus one customer balance costs
$O((d_i^-+d_i^+)q+q^2)$ elementary gates. Summing these arithmetic costs over
all customer balances gives
\begin{equation}
\label{eq:flow-gate-bound}
O\!\left(|E|q+(N-1)q^2+|E|L\right)
\end{equation}
logical gates, where the last term accounts for the $|E|L$ support phases.
For a complete graph this is
$O(N^2\log N+N\log^2N)$ with $O(\log N)$ reusable ancillas.  The degree
residuals can be evaluated arithmetically within the same asymptotic bound, and
the linear objective adds $O(|E|)$ rotations.

This is a logical-operation upper bound, not a depth, $T$-count, or
physical-resource estimate. It excludes rotation synthesis, coefficient
precision, restricted connectivity, and error correction. For example,
expanding a customer balance square creates coefficients of order $A_fU^2$;
the degree penalties and objective have their own scales, so this is not a
bound on the largest coefficient in the full Hamiltonian. Device-specific
rescaling and precision may therefore be important even when the
logical-qubit count is favorable.

\subsection{Engineering trade-off with position indexing}
\label{subsec:engineering-tradeoff}

The flow repair is not the only connectivity-correct representation. Our
single-sequence position model concatenates the $K$ route blocks into a cyclic
sequence, with $K$ depot occurrences acting as separators. With $T=N+K-1$
positions, a one-hot register $x_{v,t}$, and a no-consecutive-depot penalty,
connectivity is enforced by the ordering itself.
Table~\ref{tab:flow-position-tradeoff} compares this model with the
flow-augmented arc construction on a complete loopless graph.
For a like-for-like comparison, the reusable-workspace row applies arithmetic
compute--phase--uncompute to the squared residuals in both models; the
position model retains explicit phases for its transition table.

\begin{table}[!b]
\caption{Dense-graph trade-off for the displayed, unreduced registers. Here
$T=N+K-1$, $L=\lceil\log_2(N-K+1)\rceil$, and $q=O(\log N)$. Gate counts are
logical upper bounds per phase application, not depth or physical-resource
estimates. Under the direct termwise implementation, the gate count has the
same asymptotic order as the written-term count, so the two are combined.}
\label{tab:flow-position-tradeoff}
\centering
\footnotesize
\setlength{\tabcolsep}{1.8pt}
\begin{tabular}{@{}>{\raggedright\arraybackslash}p{.20\linewidth}
>{\raggedright\arraybackslash}p{.35\linewidth}
>{\raggedright\arraybackslash}p{.36\linewidth}@{}}
\toprule
Resource & Flow--arc & Single-sequence position\\
\midrule
Problem qubits & $N(N-1)(1+L)$ & $NT^{*}$\\
Written terms / termwise gates & $\Theta(N^3L^2)$ & $\Theta(NT^2+TN^2)$\\
Reusable-workspace gates & $O(N^2q+Nq^2)$ & $O(TN^2+NT\log N)$\\
Workspace & $O(q)$ & $O(\log N)$\\
Sparse graph & Tracks $|E|$ and node degrees & Forbidden-arc terms keep
the explicit transition table dense\\
\bottomrule
\end{tabular}
\par\smallskip
\noindent\parbox{.95\linewidth}{\raggedright\scriptsize
\({}^{*}\) The count $NT$ applies to the depot-delimited single-sequence
encoding analyzed here. A vehicle-indexed register $x_{v,t,k}$ with $S$ local
positions per vehicle uses $KNS$ variables before reductions; its qubit count
is therefore formulation- and parameter-dependent.\par}
\end{table}

For the complete graph, the displayed flow register never uses fewer problem
qubits:
\[
n_{\mathrm{prob}}^{\mathrm{flow}}-n_{\mathrm{prob}}^{\mathrm{pos}}
=N\bigl((N-1)L-K\bigr)\ge0,
\]
with equality only at $K=N-1$; eliminating the forced first position makes
the position register smaller still. This inequality is specific to the
single-sequence register. A direct vehicle-indexed one-hot model $x_{v,t,k}$
with $S$ local positions per vehicle uses $KNS$ problem variables before
reductions. It can therefore use more or fewer qubits than the flow register,
depending on $K$ and $S$, while retaining vehicle-local state that makes timing
and vehicle-specific constraints more direct. When $L$ grows, the
single-sequence model's $\Theta(N^3)$
explicit QUBO has lower written-term order than the
$\Theta(N^3L^2)$ flow expansion. For constant $L$, both are
$\Theta(N^3)$ and there is no uniform exact ordering. The flow construction's
engineering advantage is instead implementation dependent: its balance
residuals can be evaluated and uncomputed one at a time while reusing the same
workspace, without materializing their dense pairwise expansion.
The position bound in the table gives the same arithmetic treatment to its
one-hot and counting penalties but retains explicit phases for its transition
table. It is an upper bound, not a lower bound; an adjacency or cost oracle
could change the comparison, but its memory preparation and query resources
would then also have to be counted.

On a sparse directed graph, the flow register and occurrence count track
$|E|$ and the node degrees, whereas the displayed position QUBO pays for
forbidden transitions as well as allowed ones. The single-sequence position
representation also retains route-block permutation symmetry, while the arc
variables describe an unordered route set; the capped flow registers can
introduce their own
auxiliary degeneracy. Coefficient dynamic range depends on the penalty and
cost scales and cannot be ranked from asymptotic counts alone. The engineering
contribution here is an explicit qubit--coupler--gate trade-off for the two
concrete implementations under a common logical accounting model; alternative
oracle constructions lead to different resource balances.

\iflongversion
\section{Position-Indexed Alternative}
\label{subsec:solving-vrp}

We next give a single-sequence position-indexed formulation of the same
fixed-fleet, homogeneous, uncapacitated VRP. Related position-indexed QUBO
routing models
evaluated on quantum or quantum-inspired hardware include
Refs.~\cite{Harwood2021,Masuda2023,Leonidas2024QubitEfficient,Suen2022,Bialczak2026,Onah2026}.
In particular, Bia{\l}czak et al.~\cite{Bialczak2026} use
vehicle--position--customer variables for equal-demand CVRP. Under their
relation $n=mQ$, that direct register has $mQn=n^2$ problem variables, whereas
the single-sequence construction below uses one binary variable per
vertex--global-position pair. The index $t$ records a position in a cyclic
visit order, not physical time.
The Hamiltonian below does not itself encode capacities, travel times, service
times, or time windows. Such extensions remain possible, but resetting time or
load at depot separators requires boundary-aware constraints. An explicit
vehicle index makes route-local timing, capacity, and heterogeneous-fleet
constraints more direct, at the cost of the parameter-dependent $KNS$ register
described above. Unlike the two-index arc model, consecutive positions encode a
connected walk directly.

For \(K=1\), the VRP reduces to finding a minimum-cost Hamiltonian cycle, i.e.,
the TSP with a fixed starting node (the depot), which does not affect the total
cost.

Because vehicle identities do not affect the homogeneous objective, we
concatenate the $K$ route blocks into a single cyclic sequence. Depot
occurrences separate successive routes; permuting complete route blocks
represents the same unordered collection of routes.

We place the depot, denoted by node $v=1$, at position $t=1$ of the cyclic
sequence. Each subsequent position records the next visited vertex. We introduce
binary decision variables \(x_{v,t}\), where
\[
x_{v,t}=
\begin{cases}
1, & \text{if node \(v\) occupies position \(t\),}\\
0, & \text{otherwise.}
\end{cases}
\]
Nodes \(v=2,3,\ldots,N\) represent customers. The sequence contains each of the
$N-1$ customers once and contains the depot $K$ times, so its length is
\[
T = N + K - 1,
\]
and we impose a cyclic boundary condition on the position index, namely
\begin{equation}
\label{eq:time-periodic}
x_{v,T+1}\equiv x_{v,1}\qquad \text{for all } v\in\{1,\ldots,N\}.
\end{equation}
Equivalently, throughout the paper we interpret $t+1$ modulo $T$, so that the
transition term $x_{u,t}x_{v,t+1}$ for $t=T$ is understood as $x_{u,T}x_{v,1}$.

The VRP is expressed as the following quadratic optimization problem.

\noindent\textbf{Minimize:}
\begin{align}
\label{eq:vobj}
\sum_{(u,v)\in E} c_{uv}\sum_{t=1}^{T} x_{u,t}x_{v,t+1}.
\end{align}

\noindent\textbf{Subject to:}
\begin{align}
\label{eq:vcon1}
&\sum_{t=2}^{T} x_{v,t}=1,
\qquad && v=2,3,\ldots,N,\\
\label{eq:vcon2}
&\sum_{v=1}^{N} x_{v,t}=1,
\qquad && t=1,2,\ldots,T,\\
\label{eq:vcon3}
&x_{1,1}=1,\\
\label{eq:vcon4}
&\sum_{t=1}^{T} x_{1,t}=K,\\
\label{eq:vcon5}
&\sum_{(u,v)\in\overline E} x_{u,t}x_{v,t+1}=0,
\qquad && t=1,2,\ldots,T,\\
\label{eq:vcon6}
&x_{1,t}x_{1,t+1}=0,
\qquad && t=1,2,\ldots,T.
\end{align}
Constraint~\eqref{eq:vcon1} ensures that each customer is visited exactly once,
Constraint~\eqref{eq:vcon2} enforces that exactly one node occupies each
position, Constraint~\eqref{eq:vcon3} fixes the initial position at the depot,
Constraint~\eqref{eq:vcon4} enforces exactly \(K\) depot occurrences, which
serve as separators between the \(K\) cyclic route blocks, and
Constraint~\eqref{eq:vcon5} forbids transitions between distinct
vertices that are not joined by an allowed arc. Constraint~\eqref{eq:vcon6} is
the nonempty-route condition: because the $K$ depot
occurrences partition the cyclic sequence into $K$ route segments, forbidding
consecutive depot occurrences ensures that every segment contains at least one
customer. Together, these constraints exclude both forbidden transitions and
disconnected customer-only subtours.

Similarly to the TSP case, we construct the quadratic Hamiltonian
\begin{equation}
\label{eq:HVRP}
\begin{split}
H_{\mathrm{VRP}}
&= A\left(1-x_{1,1}\right)^2
+ A\sum_{v=2}^{N}\left(1-\sum_{t=2}^{T}x_{v,t}\right)^2\\
&\quad+A\sum_{t=1}^{T}\left(1-\sum_{v=1}^{N}x_{v,t}\right)^2\\
&\quad+A\left(K-\sum_{t=1}^{T}x_{1,t}\right)^2\\
&\quad+A\sum_{(u,v)\in\overline E}\sum_{t=1}^{T}
x_{u,t}x_{v,t+1}\\
&\quad+A_e\sum_{t=1}^{T}x_{1,t}x_{1,t+1}\\
&\quad+B\sum_{(u,v)\in E}c_{uv}\sum_{t=1}^{T}
x_{u,t}x_{v,t+1},
\end{split}
\end{equation}
where $A,A_e>0$ are penalty weights and $B>0$ scales the objective. The
term with coefficient $A_e$ penalizes every pair of
consecutive depot occurrences, including the cyclic boundary. Let
$P_V=\min\{A,A_e\}$ and
$c_{\max}=\max_{(u,v)\in E}c_{uv}$. If
\[
P_V \;>\; T B c_{\max},
\]
then any assignment that violates at least one of the constraints
in~\eqref{eq:vcon1}--\eqref{eq:vcon6} has energy strictly larger than every
feasible assignment. Indeed, a feasible sequence contains exactly $T$
transitions and therefore costs at most $BTc_{\max}$, whereas every violated
integer equality, forbidden transition, or consecutive-depot condition incurs
penalty at least $P_V$. A zero-penalty cyclic sequence contains each customer
once, contains the depot $K$ times with no consecutive occurrences, and uses
only allowed arcs. It consequently decomposes into exactly $K$ nonempty
depot-to-depot routes. Thus every ground-state routing assignment is feasible
and cost-optimal.

When $K=1$, the feasible VRP routes and their objective values coincide with
the fixed-depot TSP tours, but the unreduced QUBO polynomials are not identical
on infeasible bitstrings. Eliminating the forced variables
$x_{1,t}=0$ for $t=2,\ldots,N$ and $x_{v,1}=0$ for $v\ge2$ gives
\[
\left.
H_{\mathrm{VRP}}^{\mathrm{reduced}}
\right|_{\substack{
K=1\\
x_{1,t}=0,\;t=2,\ldots,N
}}
=H_{\mathrm{TSP}}^{\mathrm{reduced}}.
\]

The number of binary variables, and hence logical problem qubits in the direct
Ising encoding, is
\[
n_{\mathrm{prob}}^{\mathrm{pos}} = N\cdot T = N(N+K-1),
\]
since we have one variable \(x_{v,t}\) for each \(v\in\{1,\ldots,N\}\) and
\(t\in\{1,\ldots,T\}\). Since $1\le K\le N-1$ gives
$N\le T\le2N-2$, this scales uniformly as
\(n_{\mathrm{prob}}^{\mathrm{pos}}=\Theta(N^2)\).
As in the flow model, this count excludes embedding, routing, and
fault-tolerance overhead.

The transition part contains the allowed-arc objective, the forbidden-arc
penalty, and the explicit nonempty-route penalty. Let
$E_+=\{(u,v)\in E:c_{uv}>0\}$. Because
$|E_+|+|\overline E|\le N(N-1)$, the number of nonzero transition-monomial
occurrences is at most
\[
T|E_+|+T|\overline E|+T
\le T\bigl(N(N-1)+1\bigr)
=O(TN^2),
\]
with worst-case order $\Theta(TN^2)$. Thus sparsity of the allowed graph alone
does not imply $\Theta(T|E|)$ transition complexity: a smaller allowed-arc set
is offset by a larger forbidden-arc set.

The squared one-hot penalties contribute additional quadratic occurrences. Before
collecting repeated couplers, their total together with the transition terms is
\[
(N-1)\binom{T-1}{2}
+T\binom{N}{2}
+\binom{T}{2}
+T\bigl(|E_+|+|\overline E|\bigr)
+T.
\]
Consequently, the full explicit QUBO contains
$\Theta(NT^2+TN^2)$ quadratic-monomial occurrences. Since
$T=N+K-1=\Theta(N)$ for $1\le K\le N-1$, this becomes $\Theta(N^3)$.

After the standard map $x\mapsto(1-\sigma^z)/2$, a direct gate-based phase
separator can apply one two-qubit phase operation per collected nonzero
quadratic coupler. A direct controlled-phase or $ZZ$ decomposition needs no
ancilla; an AND--phase--uncompute pattern may instead use one reusable ancilla.
This direct implementation has the same $\Theta(NT^2+TN^2)$ asymptotic
logical-gate count. A sparse-graph improvement would require a different
adjacency-oracle or constraint implementation and does not follow from the QUBO
written here.

For the reusable-workspace row of
Table~\ref{tab:flow-position-tradeoff}, we instead apply the same arithmetic
construction to the squared counting penalties. For each customer-count,
position-occupancy, or depot-count constraint, a signed $O(\log N)$-qubit
accumulator forms the residual, the polynomial
in Eq.~\eqref{eq:signed-square-phase} applies its square phase directly on
that accumulator, and the residual computation is uncomputed and reused.
The $N-1$
customer-count residuals require
$O(NT\log N+N\log^2N)$ gates, the $T$ position-occupancy residuals require
$O(TN\log N+T\log^2N)$ gates, and the depot-count residual requires
$O(T\log N+\log^2N)$ gates. Because $T=\Theta(N)$, all squared counting
penalties together require $O(NT\log N)$ gates and $O(\log N)$ reusable
workspace. The transition, forbidden-arc, objective, and
consecutive-depot terms remain explicit and require $O(TN^2)$ phases in the
dense worst case. This gives the position-model bound
$O(TN^2+NT\log N)$ reported in the table. Thus compute--phase--uncompute
improves the arithmetic part of both formulations, but in the position model
the explicit transition table still determines the leading order.


\subsection{Forced-Variable Reductions}
\label{sec:savingQbits}

Following the method demonstrated in~\cite{Gurevich2025TSP}, we use tables to
show how logical problem-variable reduction is achieved in the VRP formulation.
The number of binary variables is reduced from $NT$ to $N(T-1)$ by fixing
\begin{equation}
\label{eq:trivialVRP}
x_{v,1}=
\begin{cases}
1 & v=1, \\
0 & v\neq 1,
\end{cases}
\end{equation}
since the concatenated route sequence begins at the depot. This saves exactly
$N$ variables. Because $x_{1,1}=1$ and the nonempty-route constraint
\eqref{eq:vcon6} imply that the adjacent positions cannot also contain the
depot, we may additionally fix
\begin{equation}
\label{eq:trivialVRPextra}
x_{1,2}=x_{1,T}=0,
\end{equation}
giving $N(T-1)-2$ variables when $T\ge3$. For the exceptional case $T=2$,
these symbols denote the same variable and only one additional variable is
removed. The two boundary fixings do not themselves enforce nonempty routes;
that property follows from the no-consecutive-depot penalty at every cyclic
pair of positions.

The exact enumerations below use only the first reduction in
Eq.~\eqref{eq:trivialVRP}. Both $NT$ and $N(T-1)$ remain $\Theta(N^2)$ over
the full range $1\le K\le N-1$; the saving is practical at small sizes rather
than asymptotic.

\begin{table}[!t]
\caption{Table representation of variables in $H_{\mathrm{VRP}}$ from
Eq.~\eqref{eq:HVRP}. Fixed variables are shown in red, while free
optimization variables are marked by $*$.}
\label{tab:reducedVRP}
\centering
\renewcommand{\arraystretch}{1.5}
\begin{tabular}{|c|c|c|c|c|c|}
\hline
v \(\setminus\) t&
\textbf{1} & \textbf{2} & \textbf{3} & \textbf{$\cdots$} & \textbf{T} \\
\hline
\textbf{1} &
\cellcolor{red!25}\textbf{1} &
\cellcolor{green!25}$*$ &
\cellcolor{green!25}$*$ &
\cellcolor{green!25}$\cdots$ &
\cellcolor{green!25}$*$ \\
\hline
\textbf{2} &
\cellcolor{red!25}\textbf{0} &
\cellcolor{green!25}$*$ &
\cellcolor{green!25}$*$ &
\cellcolor{green!25}$\cdots$ &
\cellcolor{green!25}$*$ \\
\hline
\textbf{3} &
\cellcolor{red!25}\textbf{0} &
\cellcolor{green!25}$*$ &
\cellcolor{green!25}$*$ &
\cellcolor{green!25}$\cdots$ &
\cellcolor{green!25}$*$ \\
\hline
\textbf{$\vdots$} &
\cellcolor{red!25}$\vdots$ &
\cellcolor{green!25}$\vdots$ &
\cellcolor{green!25}$\vdots$ &
\cellcolor{green!25}$\ddots$ &
\cellcolor{green!25}$\vdots$ \\
\hline
\textbf{N} &
\cellcolor{red!25}\textbf{0} &
\cellcolor{green!25}$*$ &
\cellcolor{green!25}$*$ &
\cellcolor{green!25}$\cdots$ &
\cellcolor{green!25}$*$ \\
\hline
\end{tabular}
\end{table}

These substitutions are logical consequences of the constraints and therefore
preserve the feasible route set and every feasible objective value. They reduce
the finite search domain without changing the asymptotic register size.

Substituting the fixed values from Eq.~\eqref{eq:trivialVRP} into Eq.~\eqref{eq:HVRP} and rearranging terms yields the reduced Hamiltonian, which reflects the reduced set of variables:
\ifdefined\IEEEBUILD
\begin{equation}
\label{eq:redHVRP}
\begin{aligned}
H_{\mathrm{VRP}}^{\text{reduced}} &=
A\Bigg[
\sum_{v=2}^{N} \left( 1 - \sum_{t=2}^{T} x_{v,t} \right)^2 \\
&\qquad
+ \sum_{t=2}^{T} \left( 1 - \sum_{v=1}^{N} x_{v,t} \right)^2 \\
&\qquad
+ \left( (K-1) - \sum_{t=2}^{T} x_{1,t} \right)^2
\Bigg] \\
&\quad + A\Bigg[
\sum_{(u,v)\in\overline E} \sum_{t=2}^{T-1} x_{u,t} x_{v,t+1} \\
&\qquad + \sum_{(1,v)\in\overline E} x_{v,2}
+ \sum_{(u,1)\in\overline E} x_{u,T}
\Bigg] \\
&\quad + A_e\Bigg(
x_{1,2}+x_{1,T}
+\sum_{t=2}^{T-1}x_{1,t}x_{1,t+1}
\Bigg) \\
&\quad + B\Bigg[
\sum_{(u,v)\in E} c_{uv} \sum_{t=2}^{T-1} x_{u,t} x_{v,t+1} \\
&\qquad + \sum_{(1,v)\in E} c_{1v} x_{v,2}
+ \sum_{(u,1)\in E} c_{u1} x_{u,T}
\Bigg].
\end{aligned}
\end{equation}
\else
\begin{equation}
\label{eq:redHVRP}
\begin{split}
H_{\mathrm{VRP}}^{\text{reduced}} &=
A\Bigg[
\sum_{v=2}^{N} \left( 1 - \sum_{t=2}^{T} x_{v,t} \right)^2
+ \sum_{t=2}^{T} \left( 1 - \sum_{v=1}^{N} x_{v,t} \right)^2 \\
&\qquad\quad
+ \left( (K-1) - \sum_{t=2}^{T} x_{1,t} \right)^2
\Bigg] \\
&\quad + A\Bigg[
\sum_{(u,v)\in\overline E} \sum_{t=2}^{T-1} x_{u,t} x_{v,t+1}
+ \sum_{(1,v)\in\overline E} x_{v,2}
+ \sum_{(u,1)\in\overline E} x_{u,T}
\Bigg] \\
&\quad + A_e\left(
x_{1,2}+x_{1,T}
+\sum_{t=2}^{T-1}x_{1,t}x_{1,t+1}
\right) \\
&\quad + B\Bigg[
\sum_{(u,v)\in E} c_{uv} \sum_{t=2}^{T-1} x_{u,t} x_{v,t+1}
+ \sum_{(1,v)\in E} c_{1v} x_{v,2}
+ \sum_{(u,1)\in E} c_{u1} x_{u,T}
\Bigg].
\end{split}
\end{equation}
\fi

For the TSP, since there are no intermediate visits at the starting point, we may further fix
\begin{equation}
\label{eq:trivialTSP}
x_{1,t}=
\begin{cases}
1 & t=1, \\
0 & t\neq 1,
\end{cases}
\end{equation}
which reduces the number of variables from $N^2$ to $(N-1)^2$. The fixed
variables are summarized in Table~\ref{tab:reducedTSP}.

\begin{table}[!t]
    \caption{Table representation of variables in $H_{\mathrm{TSP}}$ from
    Eq.~\eqref{eq:HTSP_fixed}. Fixed variables are shown in red, while free
    optimization variables are marked by $*$.}
    \label{tab:reducedTSP}
    \centering
    \renewcommand{\arraystretch}{1.5}
    \begin{tabular}{|c|c|c|c|c|c|}
        \hline
        v \(\setminus\) t &
        \textbf{1} & \textbf{2} & \textbf{3} & \textbf{$\cdots$} & \textbf{N} \\
        \hline
        \textbf{1} &
        \cellcolor{red!25}\textbf{1} &
        \cellcolor{red!25}\textbf{0} &
        \cellcolor{red!25}\textbf{0} &
        \cellcolor{red!25}$\cdots$ &
        \cellcolor{red!25}\textbf{0} \\
        \hline
        \textbf{2} &
        \cellcolor{red!25}\textbf{0} &
        \cellcolor{green!25}$*$ &
        \cellcolor{green!25}$*$ &
        \cellcolor{green!25}$\cdots$ &
        \cellcolor{green!25}$*$ \\
        \hline
        \textbf{3} &
        \cellcolor{red!25}\textbf{0} &
        \cellcolor{green!25}$*$ &
        \cellcolor{green!25}$*$ &
        \cellcolor{green!25}$\cdots$ &
        \cellcolor{green!25}$*$ \\
        \hline
        \textbf{$\vdots$} &
        \cellcolor{red!25}$\vdots$ &
        \cellcolor{green!25}$\vdots$ &
        \cellcolor{green!25}$\vdots$ &
        \cellcolor{green!25}$\ddots$ &
        \cellcolor{green!25}$\vdots$ \\
        \hline
        \textbf{N} &
        \cellcolor{red!25}\textbf{0} &
        \cellcolor{green!25}$*$ &
        \cellcolor{green!25}$*$ &
        \cellcolor{green!25}$\cdots$ &
        \cellcolor{green!25}$*$ \\
        \hline
    \end{tabular}
\end{table}

Substituting Eq.~\eqref{eq:trivialTSP} into
Eq.~\eqref{eq:HTSP_fixed} yields the reduced TSP Hamiltonian:
\ifdefined\IEEEBUILD
\begin{equation}
\label{eq:redHTSP}
\begin{aligned}
H_{\mathrm{TSP}}^{\text{reduced}} &=
A\Bigg[
\sum_{v=2}^{N} \left( 1 - \sum_{t=2}^{N} x_{v,t} \right)^2 \\
&\qquad + \sum_{t=2}^{N} \left( 1 - \sum_{v=2}^{N} x_{v,t} \right)^2
\Bigg] \\
&\quad + A\Bigg[
\sum_{(u,v)\in\overline E} \sum_{t=2}^{N-1} x_{u,t} x_{v,t+1} \\
&\qquad + \sum_{(1,v)\in\overline E} x_{v,2}
+ \sum_{(u,1)\in\overline E} x_{u,N}
\Bigg] \\
&\quad + B\Bigg[
\sum_{(u,v)\in E} c_{uv} \sum_{t=2}^{N-1} x_{u,t} x_{v,t+1} \\
&\qquad + \sum_{(1,v)\in E} c_{1v} x_{v,2}
+ \sum_{(u,1)\in E} c_{u1} x_{u,N}
\Bigg].
\end{aligned}
\end{equation}
\else
\begin{equation}
\label{eq:redHTSP}
\begin{split}
H_{\mathrm{TSP}}^{\text{reduced}} &=
A\Bigg[
\sum_{v=2}^{N} \left( 1 - \sum_{t=2}^{N} x_{v,t} \right)^2
+ \sum_{t=2}^{N} \left( 1 - \sum_{v=2}^{N} x_{v,t} \right)^2
\Bigg] \\
&\quad + A\Bigg[
\sum_{(u,v)\in\overline E} \sum_{t=2}^{N-1} x_{u,t} x_{v,t+1}
+ \sum_{(1,v)\in\overline E} x_{v,2}
+ \sum_{(u,1)\in\overline E} x_{u,N}
\Bigg] \\
&\quad + B\Bigg[
\sum_{(u,v)\in E} c_{uv} \sum_{t=2}^{N-1} x_{u,t} x_{v,t+1}
+ \sum_{(1,v)\in E} c_{1v} x_{v,2}
+ \sum_{(u,1)\in E} c_{u1} x_{u,N}
\Bigg].
\end{split}
\end{equation}
\fi

These substitutions do not change the $\Theta(N^2)$ asymptotic register size,
but they matter at the small sizes accessible to exact enumeration. For the
four-node TSP checked below, the reduction from $16$ to $9$ variables decreases
the enumeration domain from $2^{16}$ to $2^9$ assignments.

\fi 

\section{Exact Verification and Hardware Characterization on Small Instances}
\label{sec:experiments}

We use four complementary checks. Exact basis-state enumeration validates the
stated finite QUBOs; a deterministic local circuit audit checks the
compute--phase--uncompute logic; a systematic classical benchmark measures the
connectivity gap; and a shallow hardware experiment characterizes termwise
Ising circuits for the four-node counterexample. These checks answer different
questions and do not provide evidence of quantum advantage. The first three
subsections report classical verification or benchmarking;
Sec.~\ref{subsec:iqm-hardware} is the only QPU experiment, and its circuits use
the termwise Ising realization rather than the product-free arithmetic oracle.

\subsection{Exhaustive classical check of the flow correction}

Consider the directed graph on four vertices with depot $1$, one vehicle, and
\[
E=\{(1,2),(2,1),(3,4),(4,3),(2,3),(4,1)\}.
\]
Give the first four listed arcs cost $1$ and the last two cost $10$.  The local
degree equations have exactly two solutions.  The cheaper one,
\[
\{(1,2),(2,1),(3,4),(4,3)\},
\]
has cost $4$ but contains the customer-only cycle $3\to4\to3$.  The other is
the connected route
\[
1\to2\to3\to4\to1
\]
of cost $22$.

For the corrected model, $U=N-K=3$, $L=2$, and the register contains six
routing bits and twelve flow bits.  We set $B=1$ and
$A=A_f=A_c=41$, one unit above the sufficient objective bound
$(N-1+K)c_{\max}=40$.  Exact evaluation of all $2^{18}=262144$ assignments
gives minimum energy $22$ and one minimizing bitstring. This enumerates the
diagonal energy on every 18-bit assignment; it neither simulates a quantum
circuit nor uses hardware. The minimizer selects the connected route and
carries flows $3,2,1,0$ on arcs
$(1,2),(2,3),(3,4),(4,1)$, respectively. Thus the finite check directly
distinguishes the disconnected degree-only ground state from the connected
flow-augmented ground state.

\iflongversion
To test the independent position-model reductions, we also enumerate three
reduced registers on two separate complete undirected instances. The
four-node, one-vehicle instance is encoded both as a fixed-depot TSP and as a
position-indexed VRP; the third register encodes a three-node, two-vehicle VRP.
These complete instances are not the sparse flow-correction instance above, so
their minimum energies should not be compared with the value $22$. The
four-node encodings use $A=41$ and $A=A_e=41$, respectively, and the
three-node encoding uses $A=A_e=37$. The no-consecutive-depot term is included
in every position-indexed VRP energy. The results are summarized in
Table~\ref{tab:exact-enumeration}.
\begin{table*}[t]
\centering
\scriptsize
\setlength{\tabcolsep}{3pt}
\caption{Exact computational-basis enumeration. The first row is the sparse
flow diagnostic. The remaining three rows concern two separate complete
undirected instances; the two $N=4$, $K=1$ rows are alternative encodings of
the same routing instance. Degeneracy counts distinct binary assignments in
the indicated encoding.}
\label{tab:exact-enumeration}
\begin{tabular}{lcccccc}
\toprule
Instance & \(N\) & \(K\) & \(n\) & Penalty weights & Minimum & Degeneracy\\
\midrule
Sparse flow diagnostic & 4 & 1 & 18 & \(A=A_f=A_c=41\) & \(22\) & 1\\
Complete fixed-depot TSP & 4 & 1 & 9  & \(A=41\) & \(16\) & 2\\
Complete position VRP & 4 & 1 & 12 & \(A=A_e=41\) & \(16\) & 2\\
Complete position VRP & 3 & 2 & 9  & \(A=A_e=37\) & \(10\) & 2\\
\bottomrule
\end{tabular}
\end{table*}
\fi

\subsection{Functional audit of compute--phase--uncompute}

The reproducibility package contains a deterministic classical audit of the
reversible flow-penalty schedule on the same 18-variable sparse instance. The
archived reference uses 23 logical wires: 18 persistent problem wires and a
signed five-qubit residual accumulator, together with ancilla-free
high-control modular increments. An optimized ripple-carry realization may
instead require $O(q)$ reusable carry wires, already included in
Eq.~\eqref{eq:flow-peak-qubits}. Direct evaluation agrees with the QUBO energy
on all $2^{18}=262144$ basis assignments. The product-free square phase was
checked on all 32 signed accumulator words. Phase accumulation and cleanup
were also checked on 550 distinct inputs formed from the all-zero and all-one
strings, every singleton and complement, every minimum-energy assignment, and
512 pseudorandom strings generated with seed 20260723. The problem bits were
unchanged and the residual register returned to zero. The phase produced by
the product-free residual circuit was compared with that of the fully expanded
termwise Hamiltonian on the same input. At the fixed diagnostic angle
$\gamma=0.017$, the maximum phase difference was
$1.62\times10^{-13}$ radians. This angle is used only for the classical
consistency check; it is unrelated to the optimized hardware angles
$\gamma_\star$ below and has no variational significance. The reported
difference is a numerical phase-consistency residual, not a hardware error
rate.

This is a functional local audit, not a QPU experiment or an empirical proof of
the asymptotic gate bound. In particular, the archived unoptimized generic
decomposition uses high-control modular increments and is intentionally not
the optimized ripple-carry implementation assumed in
Eq.~\eqref{eq:flow-gate-bound}; its large decomposition therefore should not be
read as a practical NISQ circuit estimate. The audited inputs, outputs, and
verification script are included in the reproducibility
package~\cite{VRPRepro2026}.

\subsection{Systematic exact connectivity benchmark}

The preceding bitstring enumeration verifies the corrected Hamiltonian on one
sparse instance. We separately measure how often the connectivity omission
changes the exact combinatorial optimum on complete graphs. For each cost
matrix and fleet size, let $z_{\mathrm{conn}}$ be the minimum cost of $K$
nonempty depot-to-depot routes and let $z_{\mathrm{disc}}$ be the minimum cost
among degree-feasible selections containing at least one customer-only cycle.
Thus
\[
z_{\mathrm{degree}}=\min\{z_{\mathrm{conn}},z_{\mathrm{disc}}\},
\]
and $z_{\mathrm{disc}}<z_{\mathrm{conn}}$ means that every degree-only ground
state is disconnected. Equality means that connected and disconnected ground
states coexist.

We computed these quantities exactly by Held--Karp subset dynamic programming
followed by exact set-partition dynamic programming. The benchmark contains
$100$ deterministic instances for every $N=5,\ldots,10$ in each of two
families: symmetric metric costs obtained by rounding Euclidean distances
upward between distinct integer points, and asymmetric integer arc costs in
$\{1,\ldots,1000\}$. Instances were generated by a fixed SplitMix64 stream
with base seed $2026072000$. All values of $K=1,\ldots,N-1$ were evaluated, for a
total of $1{,}200$ cost matrices and $7{,}800$ matrix--fleet-size
evaluations. Table~\ref{tab:systematic-connectivity}
summarizes the common small-fleet cases for which $N-K\ge3$. The relative
underestimate is
$(z_{\mathrm{conn}}-z_{\mathrm{degree}})/z_{\mathrm{conn}}$ and is averaged
over all instances, including zero gaps.
The exact-enumeration scripts, benchmark data, and independent verification
records are archived in the accompanying reproducibility
package~\cite{VRPRepro2026}.

\begin{table}[!t]
\centering
\caption{Exact benchmark of the degree-only connectivity gap. ``Strict'' is
the percentage with $z_{\mathrm{disc}}<z_{\mathrm{conn}}$.}
\label{tab:systematic-connectivity}
\small
\setlength{\tabcolsep}{3.5pt}
\begin{tabular}{@{}lrrrr@{}}
\toprule
Cost family & $K$ & Cases & Strict (\%) & Mean gap (\%)\\
\midrule
Euclidean metric & 1 & 600 & 98.5 & 20.32\\
                 & 2 & 600 & 97.5 & 14.70\\
                 & 3 & 500 & 99.0 & 10.72\\
Asymmetric integer & 1 & 600 & 64.8 & 6.42\\
                   & 2 & 600 & 31.0 & 1.75\\
                   & 3 & 500 & 21.4 & 0.86\\
\bottomrule
\end{tabular}
\end{table}

There is also a sharp structural boundary. A disconnected degree-feasible
selection needs at least one customer on each of the $K$ depot routes and at
least two customers on a customer-only cycle. Hence it is impossible when
$N-K\le2$. It becomes possible already at $N-K=3$: among the $600$ metric
instances on this boundary, $99.83\%$ had a strictly disconnected optimum and
$0.17\%$ had a connected--disconnected tie; the corresponding asymmetric
figures were $5.50\%$ and $0.17\%$. These exact frequencies describe only the
stated deterministic samples. They are not a probability theorem, a
finite-penalty spectral test, a hardware experiment, or evidence of quantum
advantage.

\subsection{Termwise hardware characterization on the four-node diagnostic instance}
\label{subsec:iqm-hardware}

We use the hardware run as a diagnostic, not as a solver benchmark. On the same
sparse $N=4$, $K=1$ instance, we ask whether a shallow circuit follows the
erroneous low-energy preference of the degree-only model and how much the
provider-compiled footprint changes after adding the flow correction. Here
\emph{termwise} means that every collected one- and two-qubit Ising term is
implemented by its own phase operation. This is distinct from the structured
reversible-arithmetic oracle analyzed in Eq.~\eqref{eq:flow-gate-bound}.
The experiment has three stages: classical exhaustive preflight, noiseless
$p=1$ statevector parameter selection frozen before execution, and IQM
sampling. Only the third stage is hardware data.

The degree-only Hamiltonian~\eqref{eq:Azad_HVRP} acts on the six routing bits.
The full flow-augmented register contains those six bits and twelve flow bits.
For device execution we imposed the valid strengthening
\[
y_{21,0}=y_{21,1}=y_{41,0}=y_{41,1}=0,
\]
which removes the two-bit flow registers on arcs entering the depot. Every
feasible route has a canonical remaining-load certificate with zero flow on
its final arc into the depot, so this restriction preserves the routing
optimum while reducing the displayed 18-qubit register to 14 qubits. The
logical order is
\[
(q_0,\ldots,q_5)
=(x_{12},x_{21},x_{23},x_{34},x_{41},x_{43})
\]
for both Hamiltonians, followed in the reduced flow-augmented circuit by
\[
\begin{aligned}
(q_6,\ldots,q_{13})={}&
(y_{12,0},y_{12,1},y_{23,0},y_{23,1},\\
&y_{34,0},y_{34,1},y_{43,0},y_{43,1}).
\end{aligned}
\]
Thus the first six logical wires are route-choice qubits: setting \(q_j=1\)
selects the directed arc named by the corresponding \(x\)-variable. The
remaining eight wires form four two-bit flow words. They do not select
additional vehicle moves; instead, they encode the auxiliary flow carried by
the indicated selected arc and thereby certify depot connectivity.
Because $U=3$, each retained word is
$f_{ij}=y_{ij,0}+2y_{ij,1}$. In the archived count strings, $q_0$ is the
most-significant measured bit.

Before hardware execution, we used the same unscaled coefficients as in the exact test:
$B=1$ and $A=A_f=A_c=41$. Exhaustive evaluation of all
$2^{14}=16384$ reduced assignments verifies exact QUBO--Ising energy agreement
and retains the unique connected minimum of energy $22$; the six-qubit
degree-only Hamiltonian instead has the disconnected minimum of energy $4$.
No Hamiltonian rescaling was applied.

After substituting $b=(1-Z)/2$, write
\[
H_{\mathrm I}=C+\sum_qh_qZ_q+\sum_{q<r}J_{qr}Z_qZ_r.
\]
The depth-one QAOA state on $n\in\{6,14\}$ logical qubits is
\[
\lvert\psi(\gamma,\beta)\rangle
=
\exp\!\left(-i\beta\sum_{q=0}^{n-1}X_q\right)
\exp\!\left(-i\gamma H_{\mathrm I}\right)
\lvert+\rangle^{\otimes n}.
\]
Here $\gamma$ is the cost-phase angle and $\beta$ is the mixer angle. With
$R_Z(\theta)=\exp(-i\theta Z/2)$, a term $h_qZ_q$ is implemented by
$R_Z(2\gamma h_q)$ and a term $J_{qr}Z_qZ_r$ by
CNOT--$R_Z(2\gamma J_{qr})$--CNOT. The mixer uses $R_X(2\beta)$ on every
logical qubit. The Ising constant $C$ is omitted only because it contributes a
global phase. Thus the submitted logical circuits implement the Hamiltonians
defined above, rather than surrogate objective functions.

We numerically minimized the exact noiseless $p=1$ statevector expectation of
the unscaled logical Hamiltonian over
$\gamma\in[0,2\pi)$ and $\beta\in[0,\pi)$. For the degree-only case, a
$9\times9$ grid and four Nelder--Mead starts were cross-checked by seeded
differential evolution (seed 20260723); for the reduced flow-augmented case,
2,000 uniform points (seed 1) were followed by Nelder--Mead. The selected
pairs are
\[
(\gamma_\star,\beta_\star)
=(4.51968698260255,\;2.3549366790662956)
\]
for the degree-only Hamiltonian and
\[
(\gamma_\star,\beta_\star)
=(4.599315084785573,\;2.744715760897889)
\]
for the reduced flow-augmented Hamiltonian. These are numerical candidates, not
certified global optima, and the selected pairs were frozen before the QPU
outcomes were observed. We also evaluated
\[
(\gamma_\star\mathbin{\pm}\pi/10,\beta_\star),
\qquad
(\gamma_\star,\beta_\star\mathbin{\pm}\pi/20).
\]
Each shift changes one coordinate by 5\% of its search-domain period while
holding the other coordinate fixed.

The 16 circuits were submitted through Amazon Braket as one ProgramSet, that
is, one batched provider task, to IQM Emerald in region
\texttt{eu-north-1} on 23 July 2026 using Braket SDK 1.124.1. For each
Hamiltonian, the batch contained a measured uniform $p=0$ reference, the
selected $p=1$ circuit, four $p=1$ sensitivity circuits, and prepare-and-measure
$\lvert0\rangle^{\otimes n}$ and $\lvert1\rangle^{\otimes n}$ controls. Each
circuit received $2{,}000$ shots, for $32{,}000$ raw shots in total. The
controls diagnose gross preparation/readout behavior but are not calibrations
for the QAOA histograms. No readout-error mitigation or postselection was
applied. The archived provider snapshot reports an update timestamp of
23 July 2026 at 05:28 UTC; its calibration values are treated as recorded
metadata and were not independently verified. We supplied no manual physical
mapping or provider-specific compiler settings. Hence the physical-qubit
subsets and routing locations shown in Fig.~\ref{fig:iqm-topology} were chosen
automatically by the provider compiler, not selected by us. The returned
metadata does not expose the proprietary compiler version or its
optimization settings. The archive also contains 100-shot-per-circuit
technical pilots on Rigetti Cepheus-1-108Q and IQM Emerald. They are not used
in the statistics reported here.

Provider compilation assigned the measured $p=0$ reference and the two
controls to physical mappings different from the five $p=1$ circuits. For each
Hamiltonian, however, the selected circuit and all four sensitivity circuits
share one mapping. Consequently, the $p=0$--$p=1$ contrasts below are
descriptive comparisons of the returned circuits, whereas comparisons within
the $p=1$ sensitivity family are mapping matched. The multinomial-bootstrap
intervals quantify finite-shot uncertainty conditional on the observed
histograms; they do not include mapping, calibration, temporal-drift, backend,
or parameter-selection uncertainty. Every measured bit string was evaluated
with the corresponding unscaled binary Hamiltonian. The two mean-contrast
intervals below use $100{,}000$ independent multinomial resamples; the
individual intervals in Fig.~\ref{fig:iqm-hardware} use $20{,}000$.

Before provider compilation, the selected six-qubit circuit contains six $H$,
six $R_X$, ten $R_Z$, and eight CNOT gates; the selected 14-qubit circuit
contains 14 $H$, 14 $R_X$, 46 $R_Z$, and 64 CNOT gates. The returned native
six-logical-qubit circuit touches six physical qubits and uses 16 PRX and eight
CZ gates, with native dependency depth 12 and CZ-only depth six. The
14-logical-qubit circuit touches 18 physical qubits and uses 201 PRX and 124 CZ
gates, with dependency depth 127 and CZ-only depth 66. Here PRX is IQM's native
phased-$X$ single-qubit rotation, CZ is controlled-$Z$, and the reported ASAP
(as-soon-as-possible) depth is a circuit dependency depth rather than a pulse
duration. Four physical qubits touched by the 14-qubit compilation do not
appear in the returned final logical-to-readout mapping; they are compiler
routing locations, not additional logical variables or algorithmic ancillas.
Every returned CZ lies on the saved 54-qubit, 85-coupler device graph.
Figure~\ref{fig:iqm-topology} shows the returned placement. The raw counts,
logical and compiled circuits, topology snapshot, and offline analyses are
archived in the reproducibility package~\cite{VRPRepro2026}.

For direct reading of the figure, the left panel uses
\[
(\mathrm{L}0,\ldots,\mathrm{L}5)
=(x_{12},x_{21},x_{23},x_{34},x_{41},x_{43}).
\]
The right panel retains these six route-choice labels and adds
\[
\begin{aligned}
(\mathrm{L}6,\mathrm{L}7)&=(y_{12,0},y_{12,1}),&
(\mathrm{L}8,\mathrm{L}9)&=(y_{23,0},y_{23,1}),\\
(\mathrm{L}10,\mathrm{L}11)&=(y_{34,0},y_{34,1}),&
(\mathrm{L}12,\mathrm{L}13)&=(y_{43,0},y_{43,1}).
\end{aligned}
\]
Within each pair, the first and second qubits have weights one and two,
respectively. Thus every blue label identifies a precise VRP variable, whereas
a yellow \(\mathrm{R}\) label has no VRP-variable role and marks only a
physical location used by compiler routing.

\begin{figure*}[!t]
\centering
\includegraphics[width=0.82\textwidth]{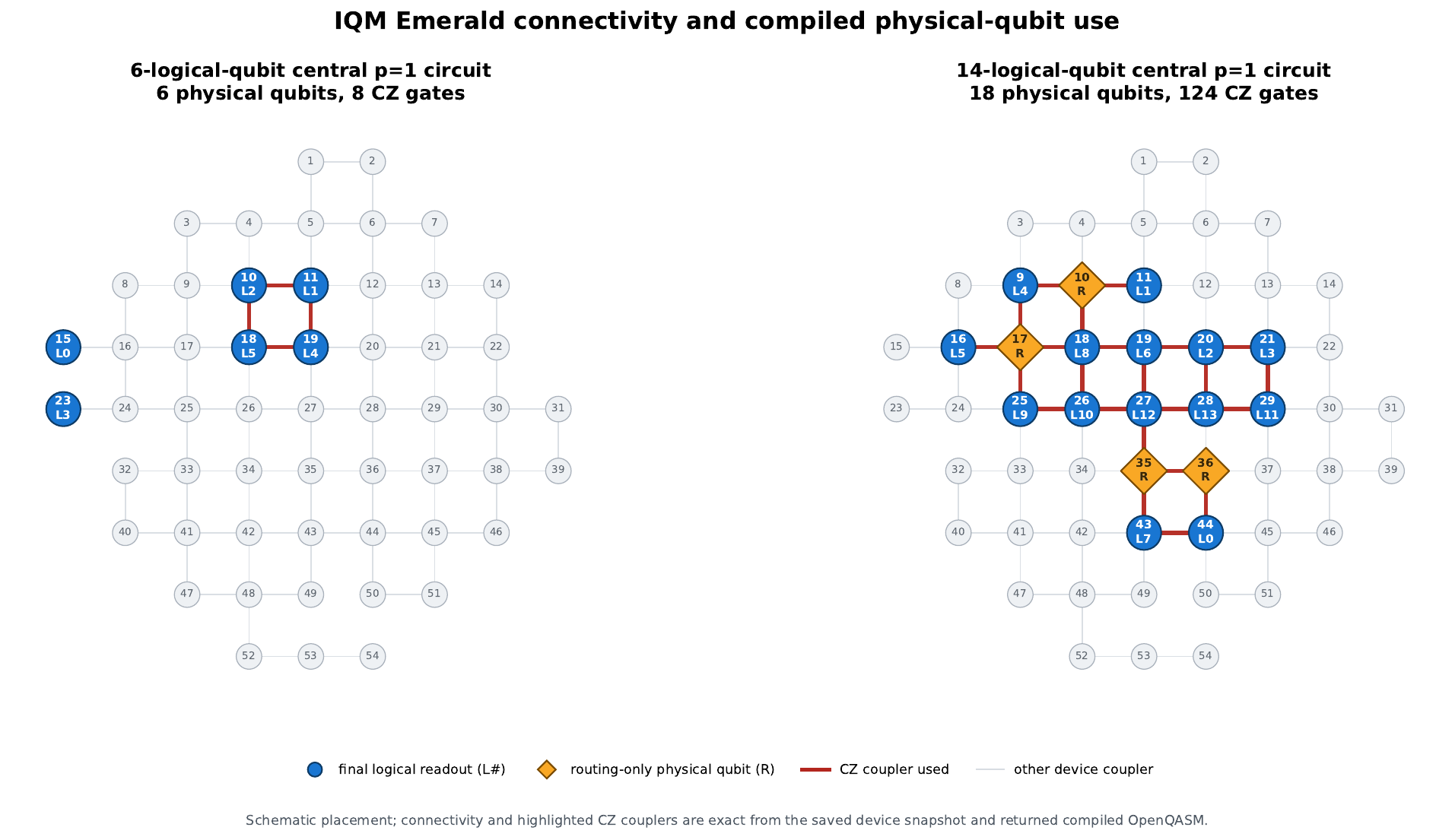}
\caption{Provider-compiled placement of the selected $p=1$ termwise-Ising
circuits on the saved IQM Emerald topology. The first line in each node is the
physical-qubit index. On a blue node, $\mathrm{L}j$ denotes logical wire $q_j$
under the variable dictionary stated immediately before the figure; the
prefix $\mathrm{L}$ means ``logical'' and is unrelated to the flow-word length
$L$. On a yellow node, $\mathrm{R}$ denotes a physical
compiler-routing location absent from the final readout mapping, not a route
variable. Each red edge is a distinct physical coupler used at least once; the
number in the panel title counts total CZ-gate occurrences. The word
``central'' inside the archived graphic refers to the selected angle pair.
The placement is the returned final mapping, not a time-resolved trace of
logical-wire motion. Node placement on the page is schematic; the saved
connectivity and highlighted couplers are exact.}
\label{fig:iqm-topology}
\end{figure*}

We classify each raw bit string before averaging its energy. In the degree-only
case, the mutually exclusive classes are: violation of a local degree/depot
constraint; satisfaction of those constraints but disconnection from the
depot; and a connected route. In the reduced flow-augmented case, the connected
class is further split according to whether the retained flow bits form a valid
balance-and-support certificate. Thus ``fully feasible'' always means a
connected VRP route; for the flow-augmented Hamiltonian it additionally
requires a valid retained-flow certificate.

Two baselines must be kept distinct. The symbol
$\bar E_{p=0}^{\mathrm{hw}}$ denotes the mean of the measured $p=0$ histogram,
whereas
\[
E_{\mathrm{unif}}=2^{-n}\operatorname{Tr}(H_{\mathrm I})
\]
is the exact uniform-distribution expectation used only to normalize
Fig.~\ref{fig:iqm-hardware}. Because the two Hamiltonians have different
energy landscapes, their raw mean energies are not compared with each other;
all contrasts are within one Hamiltonian.

For the degree-only Hamiltonian, the sampled means were
\[
\bar E_{p=0}^{\mathrm{hw}}=175.5600,
\qquad
\bar E_{p=1}^{\mathrm{hw}}=24.3480.
\]
Their descriptive contrast is $151.2120$, with conditional 95\%
multinomial-bootstrap interval $[147.6040,154.8260]$. Of the $2{,}000$
selected $p=1$ shots, $1{,}561$ ($78.05\%$) were the degree-feasible but
disconnected ground state, whereas only one shot ($0.05\%$) encoded a
connected route. The exact ideal, uncompiled logical statevector assigns
$91.4497\%$ probability to that disconnected ground state. The selected
circuit's low sampled mean is therefore dominated by the invalid degree-only
ground state.

For the reduced 14-qubit flow-augmented Hamiltonian, the sampled means were
\[
\bar E_{p=0}^{\mathrm{hw}}=706.3395,
\qquad
\bar E_{p=1}^{\mathrm{hw}}=507.7915,
\]
giving a descriptive contrast of $198.5480$ with conditional 95\% interval
$[176.8765,220.4025]$. The selected circuit has lower sampled mean energy than
each of its four mapping-matched sensitivity circuits, although this finite
sweep is not an optimality certificate. No fully model-feasible assignment was
observed in any of the six displayed flow-augmented circuits
($0$ of $12{,}000$ shots). In two selected-circuit shots, the six routing bits
selected the connected route, but the auxiliary flow bits were inconsistent. At the
selected parameters, the exact ideal, uncompiled logical statevector
ground-state probability is $4.32\times10^{-6}$, so an ideal
$2{,}000$-shot sample would have only about a $0.86\%$ chance of containing
that ground state.

\begin{figure*}[!t]
\centering
\begin{minipage}[t]{0.38\textwidth}
\centering
\includegraphics[width=\linewidth]{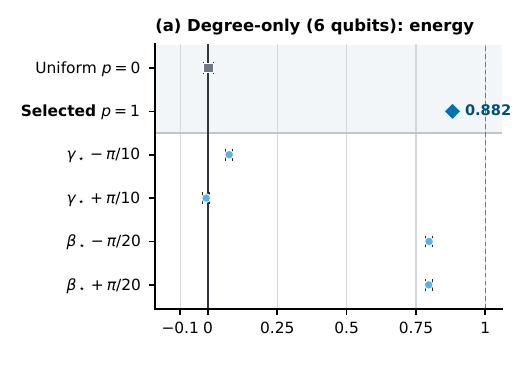}
\par\smallskip
\includegraphics[width=\linewidth]{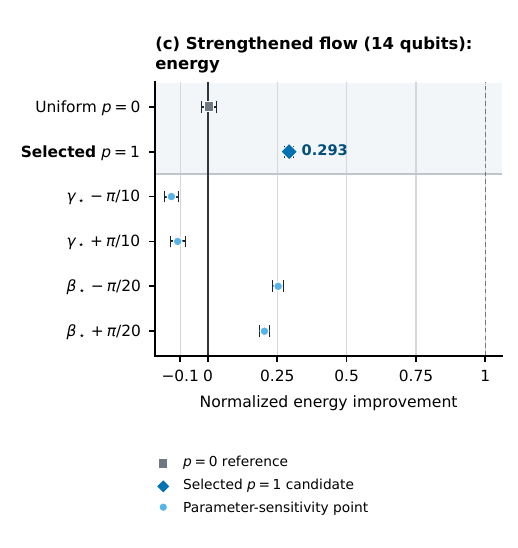}
\end{minipage}
\hspace{0.04\textwidth}
\begin{minipage}[t]{0.38\textwidth}
\centering
\includegraphics[width=\linewidth]{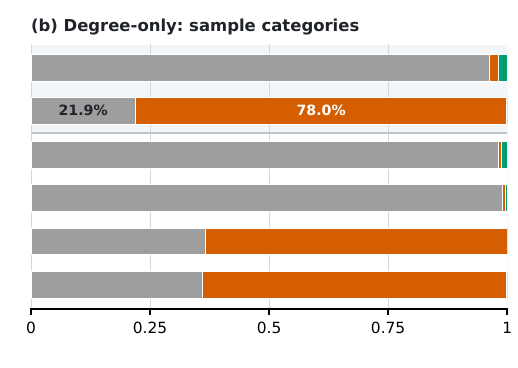}
\par\smallskip
\includegraphics[width=\linewidth]{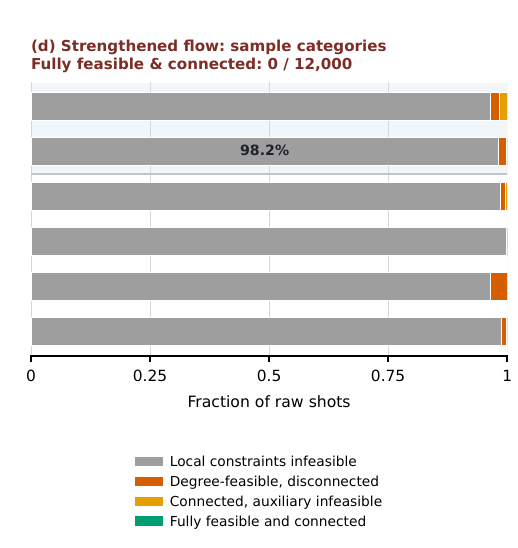}
\end{minipage}
\caption{IQM Emerald measurements for the $N=4$, $K=1$ diagnostic instance.
Panel (a) shows normalized mean energy and panel (b) the outcome fractions for
the degree-only Hamiltonian; panels (c) and (d) show the corresponding
quantities for the reduced flow-augmented Hamiltonian. Rows in each bar panel
correspond to the parameter labels in the adjacent point panel, and every bar
partitions $2{,}000$ raw shots among local-constraint violations,
degree-feasible disconnected assignments, connected assignments with invalid
retained flow (flow case only), and fully feasible assignments. The point-panel
horizontal axis is
$I=(E_{\mathrm{unif}}-\bar E)/(E_{\mathrm{unif}}-E_{\min})$: the solid line
$I=0$ is the exact uniform expectation, the dashed line $I=1$ is the exact
minimum, and $I<0$ is worse than the uniform expectation. For the degree-only
Hamiltonian, $E_{\min}$ belongs to the invalid disconnected state. Pale row
shading only groups the $p=0$ and selected rows and has no statistical meaning.
The internal title ``strengthened flow'' denotes the reduced flow-augmented
Hamiltonian described in the text.
Here ``selected'' denotes $(\gamma_\star,\beta_\star)$, with $\gamma$ the
cost-phase angle and $\beta$ the mixer angle. The labels
$\gamma_\star\mathbin{\pm}\pi/10$ and
$\beta_\star\mathbin{\pm}\pi/20$ vary only the named angle. The two controls
per Hamiltonian are not plotted. Points are measured means, horizontal
whiskers are conditional 95\% multinomial-bootstrap intervals from $20{,}000$
resamples, and every bar is a fraction of $2{,}000$ raw shots. No mitigation
or postselection was applied.}
\label{fig:iqm-hardware}
\end{figure*}

The QPU circuits implement the exact six-qubit degree-only Hamiltonian and the
explicitly reduced 14-qubit restriction of the displayed flow-augmented
Hamiltonian through a termwise Ising cost layer. They are not the
reversible-arithmetic compute--phase--uncompute circuits analyzed in the
resource section. The run therefore characterizes small-instance Hamiltonian
behavior and native compilation, but it neither tests the asymptotic gate and
workspace bounds nor demonstrates successful hardware solution of the
flow-augmented VRP instance.

\iflongversion
\subsection{Four-node single-vehicle instance}

Consider the complete undirected graph on four vertices, represented by both
orientations of every edge, with
\[
\begin{gathered}
c_{12}=1,\quad c_{13}=1,\quad c_{14}=10,\\
c_{23}=3,\quad c_{24}=9,\quad c_{34}=5.
\end{gathered}
\]
Up to reversal, its three Hamiltonian cycles have costs \(16\), \(19\), and
\(23\). The unique minimum-cost undirected cycle therefore has the two
orientations
\[
1\to2\to4\to3\to1,
\qquad
1\to3\to4\to2\to1,
\]
each of cost \(16\).

After the TSP depot variables are eliminated, the reduced register contains
\((N-1)^2=9\) bits. Exact evaluation of all \(2^9=512\) assignments gives
minimum energy \(16\) and precisely the two bitstrings encoding the displayed
orientations. For the position-indexed VRP with \(K=1\), the first-column
reduction leaves \(N(T-1)=12\) bits. Exact evaluation of all \(2^{12}=4096\)
assignments again gives minimum energy \(16\) with two minimizers. These two
models agree on their feasible routes and objective values, although their
energies on infeasible bitstrings need not agree before the additional depot
variables are eliminated.

\subsection{Three-node two-vehicle instance}

Now take the complete undirected graph on vertices \(\{1,2,3\}\), with depot
\(1\), \(K=2\), and
\[
c_{12}=1,\qquad c_{13}=4,\qquad c_{23}=9.
\]
Because both routes must be nonempty and there are only two customers, every
feasible solution assigns one customer to each route. Hence the route
collection is necessarily
\[
\{\,1\to2\to1,\;1\to3\to1\,\}
\]
and has total cost
\[
2c_{12}+2c_{13}=10.
\]
The first-column reduction leaves \(N(T-1)=9\) bits. Evaluation of all
\(2^9=512\) assignments gives minimum energy \(10\) and two minimizing
bitstrings, corresponding to the two possible orders of the route blocks in
the depot-delimited cyclic sequence:
\[
1\to2\to1\to3\to1,
\qquad
1\to3\to1\to2\to1.
\]
For indistinguishable vehicles these bitstrings encode the same unordered
route collection. The twofold degeneracy belongs to the ordering of the route
blocks, not to two different routing costs.
\fi

\iflegacymubs
\fi

\section{Conclusion and Outlook}
\label{sec:conclusion}

Local in- and out-degree penalties alone do not make an arc-based VRP
Hamiltonian connectivity correct: their zero-penalty assignments may contain
cycles disconnected from the depot. The source paper's own reported matrix
contains such a cheaper disconnected degree-feasible subgraph, and the exact benchmark in
Section~\ref{sec:experiments} shows that this is not confined to a constructed
example.

This paper contributes an explicit polynomial-size QUBO repair of the audited
arc model together with a complete ground-state guarantee. It carries the
tight fixed-fleet range $0,\ldots,N-K$
into a capped per-arc flow register, couples every flow bit to its selected
arc, proves feasibility equivalence and ground-state correctness under explicit
penalty assumptions, and separates logical problem qubits, written quadratic
interactions, and reversible-oracle resources. The displayed construction
uses
\[
|E|\bigl(1+\lceil\log_2(N-K+1)\rceil\bigr)
\]
logical problem qubits. On a complete graph its written quadratic-occurrence
count is $\Theta(N^3L^2)$, whereas workspace-reusing reversible evaluation
gives the logical upper bound $O(N^2\log N+N\log^2N)$ with $O(\log N)$
reusable workspace, without a separate product register or a materialized
dense expansion.

The flow and position encodings occupy different points of the engineering
trade-off. On a complete loopless graph, the displayed single-sequence
position model uses $NT$ problem qubits, no more than the displayed flow
register, with equality only at $K=N-1$ before forced-variable elimination.
This ordering is not generic to all position formulations: a vehicle-indexed
register with $S$ local positions per vehicle uses $KNS$ variables before
reductions and may lie on either side of the flow count. The single-sequence
model's written-occurrence count is $\Theta(N^3)$, compared with
$\Theta(N^3L^2)$ for flow; this is a lower asymptotic order when $L$ grows,
while for constant $L$ both are $\Theta(N^3)$ and the exact ordering is
parameter dependent. Under the implementations analyzed here, the position
transition table contributes $O(TN^2)$ termwise logical phases, while its
one-hot and counting penalties are evaluated with reusable workspace. The flow
balances admit the lower workspace-reusing gate upper bound above. On sparse
directed graphs, the flow resources additionally track $|E|$ and the node
degrees, whereas the displayed position model retains explicit
forbidden-transition terms. This qubit--coupler--gate comparison is the
engineering motivation for reporting all three resource notions rather than
only the number of binary variables. Reporting them together exposes the
central lesson: formulation and circuit realization must be assessed jointly.
On dense graphs, problem-qubit and expanded-term counts can favor the position
model, whereas the arithmetic structure of the flow balances yields the
smaller workspace-reusing logical gate-count upper bound under the
implementations analyzed here.

The local functional audit and the hardware experiment separate two further
questions. The audit checks exhaustive scalar energy agreement and, on the
stated deterministic audit set, phase accumulation and work-register cleanup;
it does not empirically validate the optimized asymptotic gate bound. In one Amazon Braket
ProgramSet task, the selected termwise $p=1$ circuits had lower sampled mean
energy than the measured $p=0$ references for both the six-qubit degree-only
Hamiltonian and its reduced 14-qubit flow-augmented counterpart. Because the
provider assigned $p=0$ and $p=1$ to different physical mappings, these
contrasts are descriptive, and their bootstrap intervals quantify conditional
finite-shot uncertainty only. Within each Hamiltonian, the selected and four
sensitivity circuits did share a mapping.

For the degree-only model, $78.05\%$ of the selected-circuit shots were the
invalid disconnected ground state. The selected reduced flow-augmented circuit
had lower sampled mean energy than its separately mapped $p=0$ reference, but
none of the six displayed flow-augmented circuits produced a fully
model-feasible sample in $12{,}000$ shots. Thus low measured energy cannot
repair a connectivity-defective formulation, while a connectivity-correct
Hamiltonian does not by itself guarantee useful feasible sampling on present
hardware. Because the executed circuits used termwise Ising compilation, this
experiment does not measure the resource savings of the reversible
compute--phase--uncompute realization.

The present results concern a feasible, fixed-fleet, homogeneous,
uncapacitated model with nonnegative arc costs. They do not cover capacity,
heterogeneous fleets, physical travel times, time windows, or dynamic routing,
and the analytic logical bounds are not general physical-qubit, depth,
$T$-count, or fault-tolerant estimates. The finite compiled mapping reported
above does not change that limitation. Natural next steps are to compare both
encodings and both circuit realizations under a common hardware-connectivity,
noise, memory, and rotation-precision model and to extend the corrected
construction to capacity, timing, and heterogeneous fleets.

\ifdefined\IEEEBUILD
  \newcommand{\bibtitle}[1]{``#1,''}
\else
  \newcommand{\bibtitle}[1]{#1,}
\fi

\ifdefined\IEEEBUILD
\EOD
\fi

\end{document}